	\newcommand{\figWidth}{0.8}
\theoremstyle{remark} \newtheorem{remark}{Remark}
\newtheorem{proposition}{Proposition}
\newtheorem{definition}{Definition}
\newtheorem{corollary}{Corollary}
\newcommand{\thmlabel}[1]{\label{thm:#1}}
\newcommand{\thmref}[1]{\ref{thm:#1}}
\newcommand{\Thmref}[1]{Thm.~\thmref{#1}}
\theoremstyle{remark} \newtheorem{theorem}{Theorem}
\newtheorem{example}{Example}
\newcommand{\xvec}{\mathbf{x}}
\newcommand{\yvec}{\mathbf{y}}
\newcommand{\Yvec}{\mathbf{Y}}
\newcommand{\mX}{\mathcal{X}}
\newcommand{\Lev}{\mathscr{L}}
\newcommand{\IG}{\mathscr{IG}}
\newcommand{\realSet}{\mathcal{R}}
\newcommand{\E}{\mathbb{E}}
\newcommand{\erfc}{\mathsf{erfc}}
\newcommand{\sgn}{\mathsf{sgn}}
\newcommand{\PhiG}{\Phi_{\text{G}}}
\newcommand{\argmax}{\operatornamewithlimits{argmax}}
\newcommand{\tyony}[1]{{\leavevmode\color{black}#1}}
\newcommand{\changeYony}[1]{{\leavevmode\color{black}#1}}
\begin{document}

\title{Optimal Detection for Diffusion-Based \\ Molecular Timing Channels}

\author{ \IEEEauthorblockN{Yonathan~Murin,~\IEEEmembership{Member,~IEEE,}
		Nariman~Farsad,~\IEEEmembership{Member,~IEEE,}\\
		Mainak Chowdhury,~\IEEEmembership{Student Member,~IEEE,} 
		and Andrea J. Goldsmith,~\IEEEmembership{Fellow,~IEEE}}

\thanks{The authors are with the Department of Electrical Engineering, Stanford University, Stanford, CA, 94305 USA. This work was presented in part at the IEEE Global Communication Conference (GLOBECOME), December 2016, Washington DC, USA, \cite{murinGlobeCom16}. This research was supported in part by the NSF Center for Science of Information (CSoI) under grant CCF-0939370, and the NSERC Postdoctoral Fellowship fund PDF-471342-2015.}		
}
\date{}


\maketitle
\thispagestyle{empty} 

\vspace{-0.65cm}
\begin{abstract}

	This work studies optimal detection for communication over diffusion-based molecular timing (DBMT) channels. 
	The transmitter {\em simultaneously} releases {\em multiple} information particles, where the information is encoded in the time of release. The receiver decodes the transmitted information based on the random time of arrival of the information particles, which is modeled as an additive noise channel. For a DBMT channel without flow, this noise follows the L\'evy distribution. Under this channel model, the maximum-likelihood (ML) detector is derived and shown to have high computational complexity. It is also shown that under ML detection, releasing multiple particles improves performance, while for any additive channel with $\alpha$-stable noise where $\alpha<1$ (such as the DBMT channel), under linear processing at the receiver, releasing multiple particles {\em degrades} performance {\em relative to releasing a single particle}. 
	Hence, a new low-complexity detector, which is based on the first arrival (FA) among all the transmitted particles, is proposed. It is shown that for a small number of released particles, the performance of the FA detector is very close to that of the ML detector. On the other hand, error exponent analysis shows that the performance of the two detectors differ when the number of released particles is large.
	
\end{abstract}

\section{Introduction}	

\IEEEPARstart{M}{olecular} communication (MC) is an emerging field in which nano-scale devices communicate with each other via chemical signaling, based on exchanging small {\em information particles} \cite{eckBook, far16ST}. 
For instance, in biological systems MC can take place using hormones, pheromones, or ribonucleic acid molecules. 
To embed information in these particles one may use the particle's type \cite{kim13}, concentration \cite{kur12, nak12}, number \cite{far15TNANO}, or the time of release \cite{eck07, ITsubmission}. Particles can be transported from the transmitter to the receiver via diffusion, active transport, bacteria, and flow, as described in \cite[Sec. III.B]{far16ST} and the references therein. Although this new field is still in its infancy, several basic experimental systems serve as a proof of concept for transmitting short messages at low bit rates \cite{far13,lee2015_infocom,koo16}. 

There are several similarities between traditional electromagnetic (EM) communication and MC. As a result, several prior works have used tools and algorithms developed for EM communication in the design of MC systems. 
In particular, the work \cite{kil13} studied on-off transmission via diffusion of information particles, where the information is recovered at the receiver based on the measured concentration. A channel model with finite memory was proposed, which involves additive Gaussian noise, along with several sequence detection algorithms such as maximum a-posteriori (MAP) detection and maximum likelihood (ML) detection. The work \cite{Meng14} studied a similar setup proposing a technique for inter-symbol interference (ISI) mitigation and deriving a reduced-state ML sequence detection algorithm. 
Finally, \cite{noe14_RxDesign} studied on-off transmission over diffusive molecular channels with flow, proposed an ML sequence detection algorithm for this channel, and \changeYony{designed a family of sub-optimal weighted sums detectors with relatively low complexity}. 
While the above works build upon the similarities between EM communication and MC, namely, linear channel models with additive (and in some cases Gaussian) noise, there are aspects in which MC is fundamentally different from traditional EM communication. 
For instance, in EM communication the symbol duration is fixed, while in MC the symbol duration is often a random variable (RV).
Therefore, information particles may arrive out-of-order, which makes correctly detecting particles in the order in which they were transmitted very challenging, in particular when the transmitted information particles are indistinguishable \cite{rose15, rose:InscribedPart1, nanoComNet}.

This work focuses on receiver design for MC systems where information is modulated through the {\em time of release of the information particles}, which is reminiscent of pulse position-modulation \cite{shiu99}. A common assumption, which is accurate for many sensors, is that after some time duration each particle is absorbed by the receiver and removed from the environment. In this case, the random delay until a released particle arrives at the receiver can be modeled as a channel with an additive noise term. For diffusion-based channels {\em without flow}, this additive noise is L\'evy-distributed \cite{yilmaz20143dChannelCF}, while for diffusion-based channels {\em with flow}, this additive noise follows an inverse Gaussian (IG) distribution \cite{sri12}. Fig.~\ref{fig:diffuseMolComm} illustrates the additive noise timing channel model studied in this work. 

At first glance, the cases of diffusion with and without flow may seem similar; however, a closer look reveals a fundamental difference which stems from the different properties of the additive noise modeling the random propagation delay of each particle. The L\'evy distribution has an algebraic tail\footnote{An RV $X$ has an algebraic tail if there exists $\rho_1,\rho_2 > 0$ such that $\lim_{x \to \infty} x^{\rho_2} \Pr \{ |X| > x \} = \rho_1$.} \cite{nol15, gonzales07}, while the tail of the IG distribution, similarly to the standard Gaussian distribution, decays exponentially. Thus, traditional linear detection and signal processing techniques, which work well in the presence of noise with exponentially-decaying  distributions such as Gaussian or IG noise, may perform poorly in the presence of additive L\'evy noise. 
The need for new detection methods in communication systems operating over channels with additive noise, characterized by algebraic tails, was observed in \cite{nikias-book} based on numerical simulations.
\changeYony{In this work we rigorously prove that, when multiple particles are simultaneously released, the detection performance in diffusion-based molecular timing (DBMT) channels {\em without} flow {\em cannot be improved} by linear processing, compared to optimal detection when a single particle is released. While in the case of the DBMT channel without flow the noise is L\'evy distributed, thus belonging to the family of $\alpha$-stable distributions \cite{nol15, zolotarev-book}, our result regarding the inefficiency of linear processing extends to any $\alpha$-stable noise with $\alpha < 1$.}
We note that $\alpha$-stable distributions are commonly used to model impulsive noise \cite{nikias-book, yang14, Fah_arXiv16}. Yet, the focus of the studies \cite{nikias-book, yang14, Fah_arXiv16} was on {\em symmetric} stable distributions. On the other hand, in this work we focus on the DBMT channel without flow, in which the additive noise follows the {\em asymmetric} L\'evy distribution.

In addition to the fact that the tails of the additive noise decay slowly, ordering in time is not preserved in the considered diffusion-based timing channel. In particular, the information particles associated with a given symbol may arrive later than particles associated with a subsequent symbol. This gives rise to ISI. \changeYony{In the works \cite{nanoComNet, nanocom16} we designed a sequence detector for time-slotted transmission over DBMT channels without flow, when a single information particle is used per symbol.
In this work, on the other hand, we focus on systems for which the ISI is negligible and multiple information particles are used to modulate a symbol. This setting arises in molecular communication systems with long symbol times such that the propagation delay of information particles is typically less than a symbol time.}
Negligible ISI also arises in systems with one-shot bursty communication, such as a sensor that occasionally sends a single symbol conveying one or more bits, and then remains silent for many symbol times.
Since we neglect ISI in our model, each symbol transmission can be analyzed independently.

Specifically, we consider an MC system in which the information is encoded in the time of release of the information particles, where this time is selected out of a set with {\em finite cardinality}, namely, a {\em finite constellation} is used. 
At each transmission {\em $M$ information particles are simultaneously} released at the time corresponding to the current symbol, while the receiver's objective is to detect this transmission time. Note that $M$ is constant and does not change from one transmission to the next, i.e., information is not encoded in the number of particles. The $M$ particles travel over a DBMT channel without flow. We assume that consecutive channel uses are independent and identically distributed (i.i.d.). 

We derive the ML detection rule for our system which, as expected, entails high computational complexity. 
This motivates studying detectors with lower complexity. 
A common approach to reducing detector complexity in traditional EM communication, which was also proposed in \cite{sri12} for an MC system, is to use a linear detector. 
\changeYony{We show that for $M$ i.i.d samples of {\em any} $\alpha-$stable additive noise with $\alpha<1$, and in particular for L\'evy-distributed noise, linearly combining these samples results in a $\alpha-$stable RV with dispersion larger or equal to the dispersion of the original samples. Here dispersion is a parameter of the distribution measuring its spread,\footnote{The variance of a stable distribution RV with $\alpha<2$ is infinite.} see \cite[Defs. 1.7 and 1.8]{nol15}.}
This increased dispersion degrades the probability of correct detection, compared to the case of a {\em single particle}. In other words, a linear detector in our system has better performance when a single particle is used to convey the symbol time ($M=1$) compared to when multiple particles convey the symbol time ($M>1$).
To the best of our knowledge this is the first proof that linear processing degrades the performance of multiple particle release relative to single particle release in a MC system. 

In order to take advantage of multiple transmitted particles per symbol, we propose a new detector based on the first arrival (FA) among the $M$ information particles. 
\changeYony{We show that the probability density of the FA, conditioned on the transmitted time, concentrates towards the transmission time when $M$ increases, see Fig. \ref{fig:CondDistributions} in Section \ref{subsec:FA}. This increases the probability of correct detection, compared to the case of a single particle. This is in contrast to the probability density of a linear combination of the arrival times, conditioned on the transmission time, which disperses from the transmission time compared to the case of a single particle.}
Furthermore, we show that the performance of the proposed FA detector is very close to that of the optimal ML detector, for small values of $M$ \changeYony{(on the order of tens)}.
On the other hand, we use error exponent analysis to show that for large values of $M$, i.e., $M \to \infty$, ML significantly outperforms the FA detector, which agrees with the fact that the FA is {\em not} a sufficient statistic for the arrival time of all $M$ transmitted particles, as is computed by the ML detector.  

The rest of this paper is organized as follows.
The problem formulation is presented in Section \ref{sec:ProbForm}. 
Sections \ref{sec:DBMT}--\ref{subsec:PerfCompare} study the case of a binary constellation \changeYony{($M$ particles are simultaneously released in one of two pre-defined timings)}: The ML detector and linear detection are studied in Section \ref{sec:DBMT}. The FA detector is derived in Section \ref{subsec:FA}, while its performance is compared to the performance of the ML detector in Section \ref{subsec:PerfCompare}. The FA detector is extended to the case of larger constellations in Section \ref{sec:beyondBinary}. 
Numerical results are presented in Section \ref{sec:numRes}, and concluding remarks are provided in Section \ref{sec:conc}.

{\bf {\slshape Notation}:} We denote the set of real numbers by $\realSet$, the set of positive real numbers by $\realSet^{+}$, and the set of integers by $\mathcal{N}$. Other than these sets, we denote sets with calligraphic letters, e.g., $\mathcal{B}$.
We denote RVs with upper case letters, e.g., $X$, $Y$, and their realizations with lower case letters, e.g., $x$, $y$. 
An RV takes values in the set $\mX$, and we use $|\mX|$ to denote the cardinality of a finite set. 
We use $f_{Y}(y)$ to denote the probability density function (PDF) of a continuous RV $Y$ on $\realSet$, $f_{Y|X}(y|x)$ to denote the conditional PDF of $Y$ given $X$, and $F_{Y|X}(y|x)$ to denote the conditional cumulative distribution function (CDF). 
We denote vectors with boldface letters, e.g., $\xvec, \yvec$, where the $k^{\text{th}}$ element of a vector $\xvec$ is denoted by $x_k$.
Finally, we use $\PhiG(x) = \frac{1}{\sqrt{2\pi}} \int_{-\infty}^{x}{e^{-u^2} du}$ to denote the CDF of a standard Gaussian RV, $\erfc(x) = \frac{2}{\sqrt{\pi}} \int_{x}^{\infty}{e^{-u^2} du}$ to denote the complementary error function, $\log (\cdot)$ to denote the natural logarithm, and $\E\{\cdot\}$ to denote stochastic expectation.

\section{Problem Formulation} \label{sec:ProbForm}

\subsection{System Model} \label{subsec:sysModel}

Fig. \ref{fig:diffuseMolComm} illustrates a molecular communication channel in which information is modulated on {\em the time of release of the information particles}.  
We assume that the information particles themselves are {\em identical and indistinguishable} at the receiver. Therefore, the receiver can only use the time of arrival to decode the intended message.
The information particles propagate from the transmitter to the receiver through some random propagation mechanism (e.g. diffusion). We make the following assumptions about the system:

	\renewcommand{\figWidth}{0.8}

\begin{figure}[t]
	\begin{center}
		\includegraphics[width=\figWidth\columnwidth,keepaspectratio]{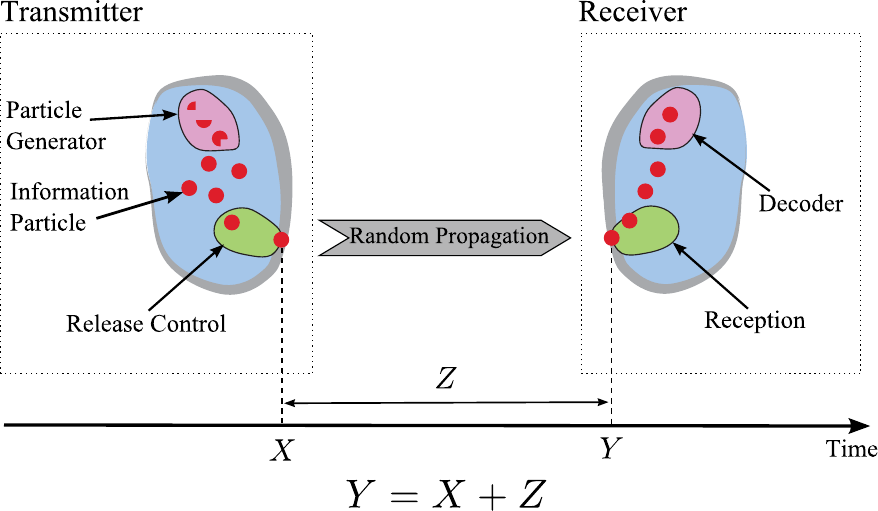}
	\end{center}
	\vspace{-0.3cm}
	\captionsetup{font=footnotesize}
	\caption{\label{fig:diffuseMolComm} Diffusion-based molecular communication timing channel. $X$ denotes the release time, $Z$ denotes the random propagation time, and $Y$ denotes the arrival time.}
	\vspace{-0.25cm}
\end{figure}

\begin{enumerate}[label = {\bf A{\arabic*}})]
	\item \label{assmp:synch}
	The transmitter perfectly controls the release time of each information particle, and the receiver perfectly measures the arrival times of the information particles. 
	Moreover, the transmitter and the receiver are perfectly synchronized in time.		
	
	\item \label{assmp:Arrival}
	An information particle that arrives at the receiver is absorbed and removed from the propagation medium.
	
	\item \label{assmp:indep}
	All information particles propagate independently of each other, and their trajectories are random according to an i.i.d. random process. This is a reasonable assumption for many different propagation schemes in molecular communication such as diffusion in dilute solutions, i.e., when the number of particles released is much smaller than the number of molecules of the solutions.
\end{enumerate} 

\noindent Note that these assumptions have been traditionally considered in all previous works, e.g. \cite{nak12,ITsubmission,ata13,pie13, li14}, in order to make the models tractable.

%
		%
		%
		%

Let $\mX$ be a finite set of constellation points on the real line: $\mathcal{X} \triangleq \{\xi_0, \xi_1, \dots, \xi_{L-1} \}$, $0 \le \xi_0 \le \dots \le \xi_{L-1}$, and let $\xi_{L-1} < T_s < \infty$ denote the symbol duration. The $k^{\text{th}}$ transmission takes place at time $(K-1)T_s + X_k, X_k \in \mX, k=1,2,\dots,K$.
At this time, $M \in \mathcal{N}$ information particles are {\em simultaneously} released into the medium by the transmitter. 
We assume that at each transmission the same number of information particles is released.
The transmitted information is encoded in the sequence $\{(K-1)T_s + X_k \}_{k=1}^K$, which is assumed to be independent of the random propagation time of {\em each} of the information particles. 
Let $\Yvec_k$ denote an $M$-length vector consisting of the times of arrival of each of the information particles released at time $(k-1)T_s + X_k$. It follows that $Y_{k,m} > X_k, m=1,2,\dots,M$. Thus, we obtain the following additive noise channel model:     
\begin{align}
	Y_{k,m} = (k-1)T_s + X_k + Z_{k,m},
\label{eq:LevyChan_k}
\end{align}

\noindent for $k \mspace{-2mu} = \mspace{-2mu} 1,2,\dots,K, m \mspace{-2mu} = \mspace{-2mu}1,2,\dots,M$, where $Z_{k,m}$ is a random noise term representing the propagation time of the $m^{\text{th}}$ particle of the $k^{\text{th}}$ transmission. Note that Assumption \ref{assmp:indep} implies that all the RVs $Z_{k,m}$ are independent. 

In the channel model \eqref{eq:LevyChan_k}, particles may arrive out of order, which results in a channel with memory. In this work, however, we assume that each information particle arrives before the next transmission takes place. This assumption can be formally stated as:
\begin{enumerate}[label = {\bf A{\arabic*}}), resume]

	\item \label{assmp:memoryless}
	$T_s$ is a fixed constant chosen to be large enough such that the transmission times $X_k$ obey $Y_{k,m} \le k T_s$ with high probability.\footnote{Formally, let $\eta$ be arbitrarily high probability, then we choose $T_s$ such that $\Pr \{ Y_{k,m} < k T_s \} > \eta, k=1,2,\dots,K, m=1,2,\dots,M$.}
	
\end{enumerate}

\noindent With this assumption, we obtain an i.i.d. memoryless channel model which can be written as:
\begin{align}
	Y_{m} = {X} + Z_{m}, \quad m=1,2,\dots,M.
\label{eq:LevyChan}
\end{align} 

\noindent In the rest of this work we focus on this memoryless channel model.

Assumption \ref{assmp:memoryless} implies that $T_s$ is chosen such that consecutive transmissions are sufficiently separated in time relative to the random propagation delays of each particle.
Thus, the effective communication channel is memoryless.
 
To simplify the presentation, in most of this work we restrict our attention to the case of binary modulation, i.e., $\mX = \{ \xi_0, \xi_1 \}$. 
Higher order modulations are discussed in Section \ref{sec:beyondBinary}.
Let $S \in \{ 0,1 \}$, be an equiprobable bit to be sent over the channel \eqref{eq:LevyChan} to the receiver, and denote the estimate of $S$ at the receiver by $\hat{S}$. We note that our results can be easily extended to the case of different a-priori probabilities on the transmitted bits. 
Our objective is to design a receiver that minimizes the probability of error $P_{\varepsilon} = \Pr \{S \neq \hat{S} \}$.  
In order to minimize $P_{\varepsilon}$ we maximize the spacing between $\xi_0$ and $\xi_1$, and without loss of generality we use the following mapping for transmission:
\begin{equation}
	X(S) = \begin{cases} 0, & s = 0 \\ \Delta , & s = 1. \end{cases}
\label{eq:TxMapping}
\end{equation}

\noindent Note that the above description of communication over an MT channel is fairly general and can be applied to different propagation mechanisms as long as Assumptions \ref{assmp:synch}--\ref{assmp:memoryless} are not violated. Next, we describe the DBMT channel.

\subsection{The DBMT Channel} \label{subsec:DBMTdef}

In diffusion-based propagation, the released particles follow a random Brownian path from the transmitter to the receiver. In this case, to specify the random additive noise term $Z_m$ in \eqref{eq:LevyChan}, we define a L\'evy-distributed RV as follows:
\begin{definition}[{\em L\'evy distribution}]
Let $Z$ be L\'evy-distributed with location parameter $\mu$ and scale parameter $c$ \cite{nol15}. Then its PDF is given by:
	\begin{align}
	\label{eq:LevyNoise}
	f_Z(z)=
	\begin{cases}
	\sqrt{\frac{c}{2 \pi (z-\mu)^3}}\exp \left( -\frac{c}{2(z-\mu)} \right), & z>\mu \\
	0, & z\leq \mu
	\end{cases},
	\end{align} 
	
	\noindent and its CDF is given by:
	\begin{align}
	\label{eqn:LevyCDF}
	F_Z(z) = \begin{cases} \erfc\left(\sqrt{\frac{c}{2(z-\mu)}}\right), & z>\mu \\ 0, & z\leq\mu \end{cases}.		
	\end{align}

\end{definition}
	
	Let $d$ denote the distance between the transmitter and the receiver, and $D$ denote the diffusion coefficient of the information particles in the propagation medium. Following along the lines of the derivations in \cite[Sec. II]{sri12}, and using the results of \cite[Sec. 2.6.A]{karatzas-shreve}, it can be shown that for 1-dimensional pure diffusion, the propagation time of each of the information particles follows a L\'evy distribution, denoted in this work by $\sim \Lev(\mu,c)$ with $c = \frac{d^2}{2D}$ and $\mu=0$. Thus, $Z_m \sim \Lev(0,c), m=1,2,\dots,M$.
	\changeYony{Note that the scale parameter $c$ increases quadratically with the distance between the transmitter and the receiver $d$, and behaves inversely linear with the diffusion coefficient $D$, that has units of squared meter per second. Thus, the scale parameter $c$ has units of seconds.}

\begin{remark}[{\em Scaled L\'evy distribution for 3-D space}]
	The work \cite{yilmaz20143dChannelCF} showed that a scaled L\'evy distribution can also model the first arrival time in the case of an infinite, three-dimensional homogeneous medium without flow. Hence, our results can be extended to 3-D space by simply introducing a scalar factor. 
\end{remark}



The L\'evy distribution belongs to the class of stable distributions, discussed in the next subsection. For a detail description we refer the reader to \cite{nol15, zol86-book}. 

\subsection{Stable Distributions}

\begin{definition}[{\em Stable distribution}]
	An RV $X$ has a stable distribution if for two independent copies of $X$, $X_1$ and $X_2$, and for any constants $a_1, a_2 \in \realSet^{+}$, there exists constants $a_3 \in \realSet^{+}$ and $a_4 \in \realSet$ such that:
	\begin{align}
		a_1 X_1 + a_2 X_2 \stackrel{d}{=} a_3 X + a_4, \label{eq:stableDistDef}
	\end{align}
	
	\noindent where $\stackrel{d}{=}$ denotes equality in distribution, i.e., both expressions follow the same probability law. 
\end{definition}

Stable distributions can also be defined via their characteristic function.
\begin{definition}[{\em Characteristic function of a stable distribution}]
	Let $-\infty < \mu < \infty, c\ge 0, 0 < \alpha \le 2$, and $-1 \le \beta \le 1$. Further define: 
	\begin{align*}
		\Phi(t,\alpha) \triangleq \begin{cases} 
												\tan \left( \frac{\pi \alpha}{2}\right), & \alpha \ne 1 \\ 
												-\frac{2}{\pi} \log (|t|), & \alpha = 1
											\end{cases}.
	\end{align*}
	
	\noindent Then, the characteristic function of a stable RV $X$, with location parameter $\mu$, scale (or dispersion) parameter $c$, characteristic exponent $\alpha$, and skewness parameter $\beta$, is given by:
	\begin{align}
		\varphi(t;\mu,c,\alpha,\beta) \mspace{-3mu} = \mspace{-3mu} \exp \left\{ j \mu t \mspace{-3mu} - \mspace{-3mu} |ct|^\alpha (1 \mspace{-3mu} - \mspace{-3mu} j \beta \sgn(t) \Phi(t,\alpha))\right\}. \label{eq:stableCharFunc}
	\end{align}
\end{definition}
	
In the following, we use the notation $\mathscr{S}(\mu, c, \alpha, \beta)$ to represent a stable distribution with the parameters $\mu, c, \alpha$, and $\beta$.
Apart from several special cases, stable distributions do not have closed-form PDFs. The exceptional cases are the Gaussian distribution ($\alpha = 2$), the Cauchy distribution $(\alpha = 1)$, and the case of $\alpha = \frac{1}{2}$ which was very recently derived in \cite[Theorem 2]{NGCE:15}. Note that the L\'evy distribution is a special case of the results of \cite{NGCE:15} with $\beta = 1$, i.e,  the L\'evy distribution belongs to the class of stable distributions with the parameters $\mathscr{S}(\mu, c, \frac{1}{2}, 1)$. Thus, its characteristic function is given by:
	\begin{align*}
		\varphi(t) = \exp \left\{ j \mu t - \sqrt{-2jct} \right\}.
	\end{align*}

\noindent Finally, we note that all stable distributions, apart from the case $\alpha = 2$, have infinite variance, and all stable distributions with $\alpha \le 1$ also have infinite mean. In fact, this statement can be generalized to moments of order $p \le \alpha$, see \cite{gonzales07}.

Next, we study ML and linear detection of particle arrival time for transmission over the DBMT channel.

\section{Transmission over the DBMT Channel: ML and Linear Detection} \label{sec:DBMT}

\subsection{Transmission over the Single-Particle DBMT Channel} \label{sec:SingleMolecule}

We begin this section with the relatively simple case in which a single information particle is released, i.e., $M=1$. For this setup, the decision rule that minimizes the probability of error, and the corresponding minimal probability of error, are given in the following proposition:
\begin{proposition}
\label{prop:decRuleSymbBySymb}
	The decision rule that minimizes the probability of error when $M=1$, is given by:
	\begin{align}
		\hat{S}_{\text{ML}}(y_1) = \begin{cases} 0, & y_1 < \theta \\ 1, &  y_1 \ge \theta, \end{cases}
		\label{eq:decisionRule}
	\end{align}
	
	\noindent where $\theta$ is the unique solution, in the interval $[\Delta, \Delta + \frac{c}{3}]$, of the following equation in $y_1$:
	\begin{align}
		y_1(y_1-\Delta) \log \left( \frac{y_1}{y_1-\Delta} \right) = \frac{c \Delta}{3}, \quad y_1 > \Delta > 0.
		\label{eq:thetaEquation}
	\end{align}
	
	\noindent Furthermore, the probability of error of this decision rule is given by:
	\begin{align}
		P_{\varepsilon} = 0.5 \left(1 - \erfc \left( \sqrt{\frac{c}{2\theta}} \right) + \erfc \left( \sqrt{\frac{c}{2(\theta - \Delta)}} \right) \right).
		\label{eq:errProbSymbBySymb}
	\end{align}
	
\end{proposition}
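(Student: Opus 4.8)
The plan is to treat this as an equiprobable binary hypothesis test. Under $S=0$, \eqref{eq:LevyChan} and \eqref{eq:TxMapping} give $Y_1 = Z_1 \sim \Lev(0,c)$, while under $S=1$ we have $Y_1 = \Delta + Z_1 \sim \Lev(\Delta,c)$, so the two conditional densities are obtained by setting $\mu=0$ and $\mu=\Delta$ in \eqref{eq:LevyNoise}. Since the priors are equal, the ML rule compares these likelihoods. For $y_1 \le \Delta$ the density under $S=1$ vanishes, so the test trivially decides $\hat S = 0$. For $y_1 > \Delta$ I would form the log-likelihood ratio, which after cancelling the common factor $\sqrt{c/(2\pi)}$ becomes
\[
\log \frac{f_{Y_1|S}(y_1|1)}{f_{Y_1|S}(y_1|0)} = \frac{3}{2}\log\frac{y_1}{y_1-\Delta} - \frac{c\Delta}{2 y_1 (y_1-\Delta)},
\]
where I used $\tfrac{1}{y_1} - \tfrac{1}{y_1-\Delta} = -\tfrac{\Delta}{y_1(y_1-\Delta)}$. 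Setting this to zero and clearing the positive factor $\tfrac23 y_1(y_1-\Delta)$ reproduces exactly \eqref{eq:thetaEquation}, and once the root structure below is established the ratio is positive iff $y_1 > \theta$, which gives the form \eqref{eq:decisionRule}.

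The crux is to show that \eqref{eq:thetaEquation} has a unique solution lying in $[\Delta,\Delta+\tfrac{c}{3}]$. To this end I define $g(y) \triangleq y(y-\Delta)\log\big(\tfrac{y}{y-\Delta}\big)$ on $(\Delta,\infty)$ and prove it increases strictly from $0$ to $\infty$. The limits are routine: as $y\to\Delta^{+}$ the vanishing factor $y(y-\Delta)$ dominates the logarithmic blow-up so $g\to 0$, while $g(y)\sim\Delta y\to\infty$. Monotonicity is the main obstacle. Differentiating yields $g'(y) = (2y-\Delta)\log\big(\tfrac{y}{y-\Delta}\big) - \Delta$, and the substitution $t=\Delta/y\in(0,1)$ reduces the inequality $g'(y)>0$ to the elementary claim $-(2-t)\log(1-t) > t$. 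This follows because the left-hand side minus $t$ vanishes at $t=0$ and has derivative $\log(1-t) + \tfrac{1}{1-t}$, which is at least $1$ on $(0,1)$ (writing $s=1-t$, the function $\log s + 1/s$ is decreasing on $(0,1]$ with value $1$ at $s=1$). Strict monotonicity then gives a unique $\theta$ with $g(\theta)=c\Delta/3$ and delivers the sign statement used above. To place $\theta$, I would evaluate $g$ at the right endpoint: with $a \triangleq 3\Delta/c$, the inequality $g(\Delta+\tfrac{c}{3}) \ge c\Delta/3$ is equivalent to $(1+a)\log(1+a)\ge a$, which holds since $w\log w - w + 1$ vanishes at $w=1$ and has derivative $\log w > 0$ for $w>1$. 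Monotonicity then forces $\theta\in(\Delta,\Delta+\tfrac{c}{3})$.

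Finally, the error probability follows by integrating the conditional densities over the decision regions of \eqref{eq:decisionRule}. Using the L\'evy CDF \eqref{eqn:LevyCDF}, the false-alarm term is $\Pr\{Y_1 \ge \theta \mid S=0\} = 1 - \erfc\big(\sqrt{c/(2\theta)}\big)$, and the miss term is $\Pr\{Y_1 < \theta \mid S=1\} = \erfc\big(\sqrt{c/(2(\theta-\Delta))}\big)$, the latter being well defined because $\theta>\Delta$. Averaging these over the equiprobable bit gives \eqref{eq:errProbSymbBySymb}. I expect the strict monotonicity of $g$ to be the only non-routine step; the threshold equation, the interval bound, and the probability expression all reduce to substitution and direct integration.
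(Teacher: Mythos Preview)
Your proposal is correct and follows essentially the same route as the paper: set up the equiprobable ML test, reduce the log-likelihood ratio to \eqref{eq:thetaEquation}, establish that $g(y)=y(y-\Delta)\log\!\big(\tfrac{y}{y-\Delta}\big)$ is strictly increasing on $(\Delta,\infty)$ via the derivative $(2y-\Delta)\log\!\big(\tfrac{y}{y-\Delta}\big)-\Delta$, and read off $P_\varepsilon$ from the L\'evy CDF. The only cosmetic differences are that the paper uses the substitution $w=1-\Delta/y_1$ (your $1-t$) together with $\log w\ge 1-1/w$ to get positivity of $g'$, and it locates $\theta$ in $[\Delta,\Delta+c/3]$ by invoking the mode $c/3$ of the L\'evy density rather than your direct verification that $g(\Delta+c/3)\ge c\Delta/3$; neither changes the substance of the argument.
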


\begin{remark}[{\em Asymmetric channel}] \label{rem:asymmetry}
	The first term on the right-hand-side (RHS) of \eqref{eq:errProbSymbBySymb} corresponds to the probability of error when $X=0$ is transmitted, while the second term corresponds to the case of $X = \Delta$. As we consider a non-negative and heavy-tailed distribution, it follows that:
\begin{equation*}
	1 - \erfc \left( \sqrt{\frac{c}{2\theta}} \right) \gg \erfc \left( \sqrt{\frac{c}{2(\theta - \Delta)}} \right).
\end{equation*}	
	
	\noindent This implies that the channel is asymmetric, and the probabilities of error in sending $S=0$ or $S=1$ are different. 
	\changeYony{The probabilities of error for each of the symbols can be made equal} by alternating the assignments of bits in \eqref{eq:TxMapping} over time, or by applying coding dedicated to asymmetric channels, see \cite{Mondelli_arXiv14, constantin, zhou13} and references therein. 
\end{remark}

\begin{proof}[Proof of Proposition \ref{prop:decRuleSymbBySymb}]
The optimal symbol-by-symbol decision rule is the MAP rule \cite[Ch. 4.1]{ProakisDigComm}. As we consider a binary detection problem with equiprobable constellation points, the MAP rule specializes to the ML rule, which using the mapping \eqref{eq:TxMapping} is written as:
\begin{align}
	\frac{f_{Y|X}(y|x=0)}{f_{Y|X}(y_1|x=\Delta)} \mspace{8mu} \begin{matrix} \hat{S} = 0 \\ \gtrless \\ \hat{S} = 1 \end{matrix} \mspace{8mu} 1, \quad y_1 > \Delta.
	\label{eq:MAPrule}
\end{align}

\noindent Plugging the density in \eqref{eq:LevyNoise} with $\mu = x$ into the left hand side (LHS) of \eqref{eq:MAPrule}, and applying $\log(\cdot)$ on both sides, we obtain \eqref{eq:thetaEquation}. The uniqueness of the threshold $\theta$ follows from the fact that the PDFs for both hypotheses are shifted versions of the L\'evy PDF, which is unimodal \cite[Ch. 2.7]{zolotarev-book}. A formal and rigorous proof for this uniqueness is provided in Appendix \ref{annex:Uniqueness_proof}.

Regarding the probability of error, for the case of $y_1<\Delta$, we note that due to causality $s$ must be equal to $0$. For $y_1 \ge \Delta$ we write:
	\begin{align*}
		P_{\varepsilon} & \stackrel{(a)}{=} 0.5 \left( \Pr \{ y > \theta | s=0 \} + \{ y \le \theta | s=1 \} \right) \\
		& \stackrel{(b)}{=} 0.5 \left(1 - \erfc \left( \sqrt{\frac{c}{2\theta}} \right) + \erfc \left( \sqrt{\frac{c}{2(\theta - \Delta)}} \right) \right),
 	\end{align*}
	
\noindent 	where (a) follows from the assumption that the symbols are equiprobable, and (b) follows from \eqref{eqn:LevyCDF}. 
We emphasize that this proposition can be easily extended to the case of unequal a-priori symbol probabilities. 
\end{proof}

	
The probability of error in molecular communication with optimal detection can be reduced by transmitting multiple information particles for each symbol \cite{sri12}, \cite{men12}, namely, using $M>1$ particles for each transmission.\footnote{As we assume that the transmitter and the receiver are perfectly synchronized, the best strategy is to simultaneously release $M$ molecules. Releasing the $M$ molecules at different times can only increase the ambiguity at the receiver and therefore increase the probability of error \cite[Sec IV.C]{sri12}.} 
\changeYony{In fact, in \cite{ITsubmission} we showed that the capacity of the DBMT channel scales at least poly-logarithmically with $M$. Yet, capacity analysis in general, including that of \cite{ITsubmission}, does not provide an analysis of the probability of error, nor does it provide decoding methods for practical modulations.}
In this section we first present the ML detector for the DBMT channel, and then discuss lower-complexity detection approaches.

\subsection{ML Detection for $M>1$}
Let $\yvec = \{ y_m\}_{m=1}^M$. The following proposition characterizes the ML detector based on the channel outputs $\yvec$:
\begin{proposition}
	\label{prop:MLdetectorMultiple}
	The decision rule that minimizes the probability of error for $M \ge 1$ is given by:
	\begin{align}
		\hat{S}_{\text{ML}}(\yvec) \mspace{-3mu} = \mspace{-3mu} \begin{cases} 1, & \forall y_m: y_m \mspace{-3mu} > \mspace{-3mu} \Delta, \text{ and } \\ & \mspace{10mu} \sum_{m=1}^{M}{\log \mspace{-2mu} \left( \mspace{-2mu} \frac{y_m-\Delta}{y_m} \mspace{-2mu} \right) \mspace{-3mu} + \mspace{-3mu} \frac{c \Delta}{3} \frac{1}{y_m(y_m - \Delta)}} \mspace{-3mu} \le \mspace{-3mu} 0 \\ 0, &  \text{otherwise}. \end{cases}
		\label{eq:decisionRuleMultiple}
	\end{align}
\end{proposition}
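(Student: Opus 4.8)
The plan is to mirror the argument of Proposition~\ref{prop:decRuleSymbBySymb}, now extended to the vector observation $\yvec = \{y_m\}_{m=1}^M$. Since the symbols are equiprobable, the MAP rule again specializes to the ML rule \cite[Ch. 4.1]{ProakisDigComm}, so I would decide $\hat{S}=1$ exactly when the joint likelihood under $X=\Delta$ dominates that under $X=0$. By Assumption~\ref{assmp:indep} the noise terms $Z_m$ are independent, hence the outputs $Y_m$ are conditionally independent given $X$ and the joint density factorizes as $f_{\Yvec|X}(\yvec|x)=\prod_{m=1}^M f_{Y|X}(y_m|x)$. Before forming the likelihood ratio I would first isolate the causality constraint: because $Y_m > X$ almost surely, the hypothesis $X=\Delta$ assigns zero likelihood to any realization with some $y_m \le \Delta$. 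Therefore, whenever at least one $y_m \le \Delta$ the detector must output $\hat{S}=0$, which supplies the ``otherwise'' branch of \eqref{eq:decisionRuleMultiple}, and the nontrivial comparison only needs to be carried out on the region $\{y_m > \Delta\ \forall m\}$.

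On that region I would form the log-likelihood ratio and reduce it to a sum of per-particle contributions. Plugging the shifted L\'evy density \eqref{eq:LevyNoise} (with $\mu=0$ and $\mu=\Delta$) into each factor, the $m$-th term of $\log\big(f_{Y|X}(y_m|\Delta)/f_{Y|X}(y_m|0)\big)$ splits into a polynomial part and an exponential part. The polynomial prefactors $\sqrt{c/(2\pi(y_m-x)^3)}$ contribute $\tfrac{3}{2}\log\!\big(y_m/(y_m-\Delta)\big)$, while the exponents combine through the identity
\begin{align*}
	\frac{1}{y_m} - \frac{1}{y_m-\Delta} = -\,\frac{\Delta}{y_m(y_m-\Delta)},
\end{align*}
so that the exponential part contributes $-\,\frac{c\Delta}{2\,y_m(y_m-\Delta)}$. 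Summing over $m$, the rule $\hat{S}=1$ corresponds to
\begin{align*}
	\sum_{m=1}^{M}\left[\frac{3}{2}\log\frac{y_m}{y_m-\Delta} - \frac{c\Delta}{2\,y_m(y_m-\Delta)}\right] \ge 0 .
\end{align*}

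Finally I would massage this into the stated normalized form. Multiplying the last inequality by the constant $-\tfrac{2}{3}<0$ flips its direction and, using $-\log\big(y_m/(y_m-\Delta)\big)=\log\big((y_m-\Delta)/y_m\big)$, yields precisely $\sum_{m=1}^{M}\big[\log((y_m-\Delta)/y_m)+\tfrac{c\Delta}{3}\,\tfrac{1}{y_m(y_m-\Delta)}\big]\le 0$, matching \eqref{eq:decisionRuleMultiple}; for $M=1$ this collapses to the threshold equation \eqref{eq:thetaEquation}, providing a consistency check. The calculation itself is routine once the densities are substituted; the step that deserves the most care is the bookkeeping, namely correctly splitting off the causality region to justify the ``otherwise $\to 0$'' branch and tracking the sign so that the inequality direction after multiplication by the negative constant matches the claimed decision regions. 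I do not anticipate a genuine obstacle beyond this bookkeeping, since no optimization over a threshold is needed here—unlike Proposition~\ref{prop:decRuleSymbBySymb}, the vector rule is stated directly as a comparison of the summed LLR against zero.
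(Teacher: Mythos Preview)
Your proposal is correct and follows essentially the same approach as the paper's own proof: reduce MAP to ML via equiprobability, factorize the joint likelihood using Assumption~\ref{assmp:indep}, take logarithms to obtain the summed condition, and handle the causality case $\exists\, y_m \le \Delta$ separately. The paper states these steps tersely without displaying the algebra; your explicit computation of the per-particle LLR and the rescaling by $-\tfrac{2}{3}$ simply fills in the details the paper leaves to the reader.
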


\begin{proof}
	The proof follows along the same lines as the proof of Prop. \ref{prop:decRuleSymbBySymb}. More precisely, as the a-priori probabilities are equal, the optimal detection rule is ML. Using Assumption \ref{assmp:indep} the joint conditional density of $\yvec$ is a product of the individual conditional densities, and applying $\log(\cdot)$ results in the condition $\sum_{m=1}^{M}{\log \left( \frac{y_m-\Delta}{y_m} \right) + \frac{c \Delta}{3} \frac{1}{y_m(y_m - \Delta)}} \le 0$. Finally, as the additive noise is positive, if $\exists y_m: y_m \le 0$, then $\hat{S}_{\text{ML}}(\yvec) = 0$.
\end{proof}

Although the above ML detector minimizes the probability of error, it lacks an exact performance analysis and is relatively complicated to compute; this in particular holds for the $\log(\cdot)$ operation \cite{paul09, gutierrez11, klinefelter15}.
In the following we denote the probability of error of the ML detector by $P_{\varepsilon, \text{ML}}$. 
In traditional wireless communication, the common approach for reducing the complexity of detection is to apply {\em linear signal processing} to the sequence $\yvec$. 
The complexity of such a receiver is significantly lower compared to that of the ML detector, and for an AWGN channel this approach is known to be optimal \cite[Ch. 3.3]{TV:05}. \changeYony{In fact, even in non-Gaussian problems such as transmission over a timing channel with drift \cite[Sec. IV.C.2]{sri12}, modeled by the additive IG noise (AIGN) channel, the performance with linear detection improves by releasing multiple particles per symbol versus releasing just a single particle}.

In the next subsection we argue that for the DBMT channel a linear receiver performs better when each symbol consists of a single particle release versus multiple particle releases. The sub-optimality of multiple particle releases versus a single particle release when linear signal processing is applied at the receiver, under channels with $\alpha$-stable additive noise was observed in \cite[Ch. 10.4.6]{nikias-book}. Yet, to the best of our knowledge, the analysis in the next sub-section is the first to rigorously show this effect.

\subsection{Linear Detection for $M>1$}

	In this subsection we consider linear detection of signals transmitted over an additive channel corrupted by $\alpha$-stable noise with characteristic exponent smaller than unity, namely, we use the channel model \eqref{eq:LevyChan}, with the minor change that $Z_m \sim \mathscr{S}(0,c,\beta,\alpha), \alpha < 1$. Thus, the results presented in this subsection also hold for the L\'evy-distributed noise. 
	Let $\{ w_m \}_{m=1}^M$, $w_m \mspace{-3mu} \in \mspace{-3mu} \realSet^{+}, \sum_{m=1}^M \mspace{-3mu} w_m \mspace{-3mu} = \mspace{-3mu} 1$ be a set of coefficients, and consider ML detection based on $Y_{\text{LIN}} \mspace{-3mu} \triangleq \mspace{-3mu} \sum_{m=1}^M \mspace{-3mu} w_m Y_m$:
	\begin{align} 
		\hat{X}_{\text{LIN}} = \argmax_{x \in \{0, \Delta\}} f_{Y_{\text{LIN}}|X}(y_{\text{LIN}}|X=x). \label{eq:lin_detector}
	\end{align}
	
	\noindent Let $P_{\varepsilon, \text{LIN}}$ denote the probability of error of the detector $\hat{X}_{\text{LIN}}$. We now have the following theorem:
	\begin{theorem} \thmlabel{thm:badlindet}
	The probability of error of the linear detector under multiple particle release ($M>1$) is higher than the probability of error of the detector with the decision rule \eqref{eq:decisionRule} under single particle release ($M=1$), namely, $P_{\varepsilon, \text{LIN}} \ge P_{\varepsilon}$, where $P_{\varepsilon}$ is given in \eqref{eq:errProbSymbBySymb}.		
	\end{theorem}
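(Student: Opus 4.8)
The plan is to reduce the linear-detection problem to an equivalent \emph{single}-particle detection problem with a larger noise dispersion, and then to show that inflating the dispersion can only increase the optimal probability of error. The first step is to identify the law of $Y_{\text{LIN}}$. Because $\sum_{m=1}^M w_m = 1$, the channel \eqref{eq:LevyChan} gives $Y_{\text{LIN}} = X + \sum_{m=1}^M w_m Z_m$, so it suffices to characterize the weighted noise sum. Using the scaling identity $\varphi_{w_m Z_m}(t) = \varphi_{Z_m}(w_m t)$ for $w_m > 0$ together with the characteristic function \eqref{eq:stableCharFunc} and the independence of the $Z_m$ (Assumption \ref{assmp:indep}), the product of the individual characteristic functions gives $\sum_{m=1}^M w_m Z_m \stackrel{d}{=} \mathscr{S}\!\left(0, c', \alpha, \beta\right)$ with $c' = c\left(\sum_{m=1}^M w_m^\alpha\right)^{1/\alpha}$. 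Hence, conditioned on $X$, $Y_{\text{LIN}} \sim \mathscr{S}(X, c', \alpha, \beta)$, which is identical in form to the single-particle channel $Y = X + Z$ but with the dispersion $c$ replaced by $c'$.

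The second step is the dispersion inequality. Since $0 < w_m \le 1$ and $\alpha < 1$, each weight obeys $w_m^\alpha \ge w_m$, whence $\sum_{m=1}^M w_m^\alpha \ge \sum_{m=1}^M w_m = 1$ and therefore $c' \ge c$, with strict inequality as soon as $M>1$ with more than one positive weight. Linear combining thus strictly enlarges the dispersion of the effective noise. The third step establishes that the optimal error probability is monotone in the dispersion. Rather than tracking the threshold $\theta$ from \eqref{eq:thetaEquation} as $c$ varies, I would use a stochastic-degradation argument: since the dispersions of independent stable summands with common $(\alpha,\beta)$ add as $\alpha$-th powers, defining $c_W \triangleq \left((c')^\alpha - c^\alpha\right)^{1/\alpha}$---which is real and positive precisely because $c' > c$---and letting $W \sim \mathscr{S}(0, c_W, \alpha, \beta)$ be independent of $Z$, one obtains $Y + W \stackrel{d}{=} Y_{\text{LIN}}$ conditioned on $X$. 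Consequently, any decision rule for the $Y_{\text{LIN}}$ problem can be emulated from the single-particle observation $Y$ by drawing an independent $W$, forming $Y+W$, and applying that rule, achieving an identical error probability. Because $\hat{X}_{\text{LIN}}$ is the ML rule for $Y_{\text{LIN}}$, its error $P_{\varepsilon,\text{LIN}}$ equals the optimal error of the dispersion-$c'$ problem, while \eqref{eq:decisionRule} is optimal for the dispersion-$c$ problem; the emulation shows the latter optimum is no larger, giving $P_{\varepsilon,\text{LIN}} \ge P_\varepsilon$. For the L\'evy channel $(\alpha=\tfrac12,\beta=1)$ this single-particle optimum is exactly \eqref{eq:errProbSymbBySymb}.

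I expect the monotonicity step to be the main obstacle. A frontal attack would require showing that the error \eqref{eq:errProbSymbBySymb} increases with $c$ while $\theta$ itself moves with $c$ through the implicit relation \eqref{eq:thetaEquation}, which is delicate. The degradation argument avoids this entirely, so the remaining care lies in rigorously justifying the data-processing step---that prepending an $X$-independent (possibly randomized) map to the observation cannot decrease the minimum probability of error---and in verifying that $c_W$ is well defined, which is exactly the place where the hypothesis $\alpha < 1$ (through $c' > c$) is indispensable.
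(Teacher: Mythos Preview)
Your proof follows the same three-step skeleton as the paper: (i) identify the law of $Y_{\text{LIN}}$ via characteristic functions as a single-particle stable channel with dispersion $c' = c\bigl(\sum_m w_m^\alpha\bigr)^{1/\alpha}$; (ii) show $c' \ge c$; (iii) conclude that enlarging the dispersion cannot decrease the optimal error. Steps (i) and (ii) match the paper almost verbatim; your justification $w_m^\alpha \ge w_m$ for $0<w_m\le 1$ and $\alpha<1$ is in fact a slightly cleaner way to obtain $\sum_m w_m^\alpha \ge 1$ than the paper's bare assertion.

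The genuine difference is in step (iii). The paper simply appeals to the unimodality of stable laws and asserts that ``the probability of error increases with $c$.'' You instead give a channel-degradation argument: writing $Y_{\text{LIN}} \stackrel{d}{=} Y + W$ with $W \sim \mathscr{S}(0,c_W,\alpha,\beta)$ independent of $(X,Y)$ and $c_W^\alpha = (c')^\alpha - c^\alpha \ge 0$, any detector for $Y_{\text{LIN}}$ can be simulated from $Y$ by adding independent noise, so by a data-processing inequality the optimal error based on $Y$ cannot exceed that based on $Y_{\text{LIN}}$. Your route is more rigorous and more general---it needs neither unimodality nor any tracking of the threshold $\theta(c)$ through \eqref{eq:thetaEquation}---whereas the paper's unimodality remark, while correct in spirit (it really uses that the conditional laws form a scale family, so increasing $c$ is equivalent to shrinking the effective separation $\Delta/c$), leaves the monotonicity-in-$c$ step as a heuristic. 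Both arguments arrive at the same conclusion, and both use the hypothesis $\alpha<1$ only to secure $c' \ge c$.
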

	
	\begin{proof}
		We show that given $X = x$, $Y_{\text{LIN}} \sim \mathscr{S}(x, c_{\text{LIN}},\alpha,\beta)$, with $c_{\text{LIN}} \ge c$. 
		Note that when $X=x$ is given then the $Y_m$'s are independent. Therefore, the characteristic function of $Y_{\text{LIN}}$, given $X=x$, is given by:
	\begin{align*}
		& \varphi_{Y_{\text{LIN}}|X=x}(t) \nonumber \\ 
		& \quad = \prod_{m=1}^M \exp \Big\{ j x w_m t \nonumber \\
		& \mspace{120mu} - |c w_m t|^\alpha (1 - j \beta \sgn(w_m t) \Phi(w_m t,\alpha)) \Big\}	\\
		& \quad \stackrel{(a)}{=} \prod_{m=1}^M \exp \left\{ j x w_m t - |c w_m t|^\alpha (1 - j \beta \sgn(t) \Phi(t,\alpha))\right\}	\\
		& \quad = \exp \left\{ \sum_{m=1}^M \left\{ j x w_m t - |c w_m t|^\alpha (1 - j \beta \sgn(t) \Phi(t,\alpha)) \right\} \right\} \\
		& \quad \stackrel{(b)}{=} \exp \left\{ j x t - \left( \sum_{m=1}^M c w_m ^\alpha \right) |t|^\alpha (1 - j \beta \sgn(t) \Phi(t,\alpha))  \right\} \\
		& \quad \stackrel{(c)}{=} \exp \left\{ j x t -  |c_{\text{LIN}}t|^\alpha (1 - j \beta \sgn(t) \Phi(t,\alpha))  \right\},
	\end{align*}
	
	
	\noindent where (a) follows from the fact that $w_m \mspace{-3mu} > \mspace{-3mu} 0$ and from the fact that $\Phi(t,\alpha)$ is independent of $t$, for $\alpha \mspace{-3mu} < \mspace{-3mu} 1$ ; (b) follows from the fact that $\sum_{m=1}^M w_m = 1$; and (c) follows by defining $c_{\text{LIN}} = c \cdot \left( \sum_{m=1}^M w_m^{\alpha} \right)^{\frac{1}{\alpha}}$.
	Therefore, given $X=x$, we have $Y_{\text{LIN}} \sim \mathscr{S}(x, c_{\text{LIN}},\alpha,\beta)$. 
	Since $w_m \le 1, m=1,2,\dots,M$, we have $\left( \sum_{m=1}^M w_m^{\alpha} \right)^{\frac{1}{\alpha}} \ge 1$, and therefore $c_{\text{LIN}} \ge c$. Finally, as $c$ is the dispersion of the distribution, and since stable distributions are unimodal \cite[Ch. 2.7]{zolotarev-book}, it follows that the probability of error increases with $c$. Therefore, we conclude that $P_{\varepsilon, \text{LIN}} \ge P_{\varepsilon}$. 
	\end{proof}

As the L\'evy distribution is a special case of the family $\mathscr{S}(0,c,\beta,\alpha), \alpha < 1$, we have the following corollary:
\begin{corollary} \label{cor:linDegradeLevy}
	In DBMT channels without flow and $M>1$, the linear detector has worse performance compared to the case of $M=1$.
\end{corollary}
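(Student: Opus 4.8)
The plan is to recognize that this corollary is a direct specialization of \Thmref{thm:badlindet} to L\'evy-distributed noise, so the entire argument reduces to verifying that the DBMT channel without flow satisfies the theorem's hypotheses. Since \Thmref{thm:badlindet} already establishes the inefficiency of linear processing for every stable noise $\mathscr{S}(0,c,\alpha,\beta)$ with $\alpha < 1$, all that remains is a parameter identification.

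First I would recall the identification made earlier: for one-dimensional pure diffusion the propagation noise obeys $Z_m \sim \Lev(0,c)$, and the L\'evy distribution is exactly the stable law $\mathscr{S}(0,c,\frac{1}{2},1)$. Hence its characteristic exponent is $\alpha = \frac{1}{2}$ and its skewness is $\beta = 1$. Since $\frac{1}{2} < 1$, this noise lies precisely in the family treated by \Thmref{thm:badlindet}, and the fully-skewed value $\beta = 1$ is admissible because the theorem places no restriction on $\beta \in [-1,1]$. I would then simply apply the theorem to conclude $P_{\varepsilon, \text{LIN}} \ge P_{\varepsilon}$, with $P_{\varepsilon}$ the single-particle error probability of \eqref{eq:errProbSymbBySymb}.

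There is essentially no obstacle here, since the corollary is an immediate consequence of the theorem. For completeness I would remark that nothing in the theorem's argument degenerates at $\alpha = \frac{1}{2}$: step (a) uses only that $\Phi(t,\alpha)$ is independent of $t$, which holds whenever $\alpha \ne 1$, and the concluding bound $c_{\text{LIN}} = c\,(\sum_{m=1}^M w_m^{\alpha})^{1/\alpha} \ge c$ rests on $w_m \le 1$ together with $\alpha < 1$, a bound that is in fact strict for $M > 1$ because every weight is then strictly below unity.
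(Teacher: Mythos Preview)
Your proposal is correct and matches the paper's approach exactly: the paper presents this corollary immediately after \Thmref{thm:badlindet} with no separate proof, simply noting that the L\'evy distribution is a special case of the family $\mathscr{S}(0,c,\alpha,\beta)$ with $\alpha < 1$. Your additional remarks verifying that the theorem's argument does not degenerate at $\alpha = \tfrac{1}{2}$ are a nice touch, though the paper does not bother with them.
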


The result of Corollary \ref{cor:linDegradeLevy} is demonstrated in Section \ref{sec:numRes}.

\begin{remark}[{\em Comparison to the AIGN channel}]
	The difference between the AIGN channel (or the AWGN channel) and the channel considered in this paper stems from the fact that for the AIGN, the (weighted) averaging associated with linear detection can decrease the noise variance, namely, the tails of the noise. On the other hand, in the case of the L\'evy distribution, averaging leads to a heavier tail, and therefore to a higher probability of error.
\end{remark}

\begin{remark}[{\em Decision delay}]
In order to implement the ML detector \eqref{eq:decisionRuleMultiple}, the receiver must wait for all particles to arrive  as all particle arrival times are used in the detection algorithm. However, as the L\'evy distribution has heavy tails, this may result in very long decision delays. In fact, the average decision delay of such a receiver will be infinite. 
\end{remark}

In the next section we present a simple detector that is based on the time associated with the first particle arrival. This detector requires a short reception interval (on the order of the single-particle case) and achieves performance very close to that achieved by the ML detector for small values of $M$.


\section{Transmission over the DBMT Channel for $M>1$: FA Detection} \label{subsec:FA}
The detector proposed in this section detects the transmitted symbol based only on the FA among the $M$ particles, namely, it waits for the first particle to arrive and then applies ML detection based on this arrival. 
In terms of complexity, the FA detector simply compares the first arrival to a threshold; this is in contrast to the complicated ML detector in \eqref{eq:decisionRuleMultiple}. 

Let $y_{\text{FA}} = \min \{ y_1,y_2,\dots,y_M\}$. In the sequel we show that the PDF of $Y_{\text{FA}}$ is more concentrated towards the release time than the original L\'evy distribution.
 The FA detector is presented in the following theorem:
\begin{theorem} \label{thm:FA_detector}
	The decision rule that minimizes the probability of error, based on $y_{\text{FA}}$, is given by:
	\begin{align}
		\hat{S}_{\text{FA}}(y_{\text{FA}}) = \begin{cases} 0, & y_{\text{FA}} < \theta_M \\ 1, &  y_{\text{FA}} \ge \theta_M, \end{cases}
		\label{eq:decisionRuleFA}
	\end{align}
	
	\noindent where $\Delta \le \theta_M \le \theta_{M-1}, \theta_1 = \theta$, is the solution of the following equation in $y_{\text{FA}}$:
	\begin{align}
		& y_{\text{FA}}(y_{\text{FA}} \mspace{-3mu} - \mspace{-3mu} \Delta) \mspace{-3mu} \cdot \mspace{-3mu}  \log \left( \frac{y_{\text{FA}}}{y_{\text{FA}} \mspace{-3mu} - \mspace{-3mu} \Delta} \right)  \nonumber \\
		& \mspace{10mu} + \mspace{-3mu} y_{\text{FA}}(y_{\text{FA}} \mspace{-3mu} - \mspace{-3mu} \Delta) \mspace{-3mu} \cdot \mspace{-3mu} \log \mspace{-3mu} \left( \mspace{-3mu} \frac{1 \mspace{-3mu} - \mspace{-3mu} \erfc \mspace{-3mu} \left( \sqrt{\frac{c}{2(y_{\text{FA}} - \Delta)}} \right)}{1 \mspace{-3mu} - \mspace{-3mu} \erfc \mspace{-3mu} \left( \sqrt{\frac{c}{2y_{\text{FA}}}} \right)} \mspace{-3mu} \right)^{\mspace{-10mu} \frac{2(M \mspace{-3mu} - \mspace{-3mu}1)}{3}} \mspace{-12mu} = \mspace{-3mu} \frac{c \Delta}{3}, 
		\label{eq:thetaMEquation}
	\end{align}
	
	\noindent for $y_{\text{FA}} \ge \Delta > 0$. 	
	The probability of error of the FA detector is given by:
	\begin{align}
		P_{\varepsilon, \text{FA}} & = 0.5 \Bigg( \left( 1 - \erfc \left( \sqrt{\frac{c}{2\theta_M}} \right) \right)^M \nonumber \\
		& \mspace{40mu} + 1 - \left(1 - \erfc \left( \sqrt{\frac{c}{2(\theta_M - \Delta)}} \right) \right)^M \Bigg).
		\label{eq:errProbSymbBySymbFA}
	\end{align}
\end{theorem}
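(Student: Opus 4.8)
The plan is to reduce everything to the distribution of the minimum of $M$ i.i.d.\ L\'evy variables and then mirror the structure of the proof of Proposition~\ref{prop:decRuleSymbBySymb}. First I would compute the conditional law of $Y_{\text{FA}}$. Given $X=x$, the arrivals $Y_m = x + Z_m$ are i.i.d.\ with CDF $F_Z(y-x)$ from \eqref{eqn:LevyCDF}, so the minimum has survival function $\Pr\{Y_{\text{FA}} > y \mid X=x\} = (1 - F_Z(y-x))^M$. This gives the conditional CDF $F_{Y_{\text{FA}}|X}(y|x) = 1 - \big(1 - \erfc(\sqrt{c/(2(y-x))})\big)^M$ for $y>x$, and, differentiating, the conditional PDF $f_{Y_{\text{FA}}|X}(y|x) = M\big(1 - \erfc(\sqrt{c/(2(y-x))})\big)^{M-1} f_Z(y-x)$, with $f_Z$ the L\'evy density \eqref{eq:LevyNoise}.

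Next, since the symbols are equiprobable, the optimal rule is ML, so I would form the likelihood ratio $f_{Y_{\text{FA}}|X}(y|0)/f_{Y_{\text{FA}}|X}(y|\Delta)$ for $y>\Delta$, take logarithms, and set the result to zero. The exponential and polynomial factors of the two L\'evy densities combine exactly as in Proposition~\ref{prop:decRuleSymbBySymb}, producing the term $y(y-\Delta)\log(y/(y-\Delta))$ and the constant $c\Delta/3$, while the $(M-1)$-st powers of the $(1-\erfc(\cdot))$ factors contribute the additional logarithmic term carrying the exponent $2(M-1)/3$. Multiplying through by $y(y-\Delta)$ then yields precisely \eqref{eq:thetaMEquation}, and setting $M=1$ recovers \eqref{eq:thetaEquation}, so $\theta_1=\theta$.

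To justify that the ML rule is the threshold rule \eqref{eq:decisionRuleFA}, I would show the log-likelihood ratio is monotone decreasing in $y_{\text{FA}}$, hence crosses zero exactly once; this is the analogue of the unimodality argument used for Proposition~\ref{prop:decRuleSymbBySymb} and can be made rigorous along the lines of Appendix~\ref{annex:Uniqueness_proof}. For the ordering $\theta_M \le \theta_{M-1}$, write the left-hand side of \eqref{eq:thetaMEquation} as $g_M(y) = h(y) + \tfrac{2(M-1)}{3}\phi(y)$, where $h(y)=y(y-\Delta)\log(y/(y-\Delta))$ is the $M=1$ term and $\phi(y)=y(y-\Delta)\log\big(\tfrac{1-\erfc(\sqrt{c/(2(y-\Delta))})}{1-\erfc(\sqrt{c/(2y)})}\big)$. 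Because $\erfc$ is decreasing and $y-\Delta<y$, the argument of the inner logarithm exceeds one, so $\phi(y)\ge 0$; moreover $\phi(y)\to 0$ as $y\downarrow\Delta$, so $g_M(\Delta^{+})=0$ and $g_M$ crosses $c\Delta/3$ from below. Consequently $g_M(\theta_{M-1}) = g_{M-1}(\theta_{M-1}) + \tfrac{2}{3}\phi(\theta_{M-1}) \ge c\Delta/3 = g_M(\theta_M)$, which, together with the single upward crossing, forces $\theta_M \le \theta_{M-1}$; the bound $\theta_M \ge \Delta$ follows from $g_M(\Delta^{+})=0 < c\Delta/3$.

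Finally, the error probability follows directly from the threshold rule. With equiprobable symbols, $P_{\varepsilon,\text{FA}} = 0.5\big(\Pr\{Y_{\text{FA}}\ge\theta_M\mid X=0\} + \Pr\{Y_{\text{FA}}<\theta_M\mid X=\Delta\}\big)$, and substituting the survival function $\big(1-\erfc(\sqrt{c/(2\theta_M)})\big)^M$ and the CDF $1-\big(1-\erfc(\sqrt{c/(2(\theta_M-\Delta))})\big)^M$ from the first step yields \eqref{eq:errProbSymbBySymbFA}. I expect the main obstacle to be the two monotonicity facts, namely uniqueness of $\theta_M$ (that the log-likelihood ratio crosses zero only once) and the ordering $\theta_M \le \theta_{M-1}$, since the defining equation is transcendental in the $\erfc$ terms and a careful sign analysis of $g_M$ and its derivative is required; the order-statistics computation and the error-probability evaluation are routine.
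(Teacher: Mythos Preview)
Your proposal is correct and mirrors the paper's proof: both derive the conditional law of $Y_{\text{FA}}$ via order statistics, apply the ML rule to obtain \eqref{eq:thetaMEquation}, and compute $P_{\varepsilon,\text{FA}}$ by substituting the FA survival function and CDF. The only minor difference is in the argument for $\theta_M \le \theta_{M-1}$: the paper rearranges the threshold equation as $f_{Y|X}(y|0)/f_{Y|X}(y|\Delta) = \big(\tfrac{1-\erfc(\sqrt{c/(2(y-\Delta))})}{1-\erfc(\sqrt{c/(2y)})}\big)^{M-1}$, observes that the left side is decreasing in $y$ on $[\Delta,\Delta+c/3]$ while the right side is increasing in $M$, and concludes the crossing point moves left, whereas your additive decomposition $g_M = h + \tfrac{2(M-1)}{3}\phi$ with $\phi \ge 0$ reaches the same conclusion by an essentially equivalent manipulation.
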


\begin{proof}
	The detection rule that minimizes the probability of error is the ML detector based on $Y_{\text{FA}}$. This requires the PDF and CDF of $Y_{\text{FA}}$ given $X$.  
	Let $F_{Y|X}(y|x)$ denote the CDF of $y_m$ given $X$. Assumption \ref{assmp:indep} implies that given $X$, the channel outputs $Y_1,Y_2,\dots,Y_M$ are independent. Hence, using basic results from order statistics \cite[Ch. 2.1]{OrderStat-book}, we write:
	\begin{align}
		F_{Y_{\text{FA}}|X}(y_{\text{FA}}|x) & = 1 - \left( \Pr \{ Y > y | X=x \} \right)^M \nonumber \\
		& \stackrel{(a)}{=} 1 - \left(1 - \erfc \left( \sqrt{\frac{c}{2(y - x)}} \right) \right)^M \nonumber \\
		& \triangleq \Psi \left( c, M, y-x \right), 
		\label{eq:yFAcdf}
	\end{align}
	
	\noindent where (a) follows from \eqref{eqn:LevyCDF}.
	Next, to obtain the PDF of $Y_{\text{FA}}$ given $X$, we write:
	\begin{align}
		f_{Y_{\text{FA}}|X}(y_{\text{FA}}|x) & = \mspace{-3mu} \frac{\partial F_{Y_{\text{FA}}|X}(y_{\text{FA}}|x)}{ \partial y_{\text{FA}}} \nonumber \\
		& = \mspace{-3mu} M \mspace{-3mu} \cdot \mspace{-3mu} f_{Y|X}(y|x) \mspace{-3mu} \cdot \mspace{-3mu} \left( \mspace{-3mu} 1 \mspace{-3mu} - \mspace{-3mu} \erfc \left( \sqrt{\frac{c}{2(y \mspace{-2mu} - \mspace{-2mu} x)}} \mspace{-3mu} \right) \mspace{-3mu} \right)^{\mspace{-3mu} M-1} \nonumber \\
		& = \mspace{-3mu} M \cdot \sqrt{\frac{c}{2 \pi (y-x)^3}} \exp \left( - \frac{c}{2(y-x)} \right) \nonumber \\
		& \mspace{45mu} \times \mspace{-3mu} \left(1 \mspace{-3mu} - \mspace{-3mu} \erfc \left( \sqrt{\frac{c}{2(y - x)}} \right) \right)^{M-1}. 
		\label{eq:yFApdf}
	\end{align}
	
\noindent Hence, the ML decision rule based on the measurement $y_{\text{FA}}$ is given by:
\begin{align}
	\frac{f_{Y_{\text{FA}}|X}(y_{\text{FA}}|x=0)}{f_{Y_{\text{FA}}|X}(y_{\text{FA}}|x=\Delta)} \mspace{8mu} \begin{matrix} \hat{S} = 0 \\ \gtrless \\ \hat{S} = 1 \end{matrix} \mspace{8mu} 1, \quad y_{\text{FA}} > \Delta.
	\label{eq:MAPruleFA}
\end{align}

\noindent Plugging the density in \eqref{eq:yFApdf} into the LHS of \eqref{eq:MAPruleFA}, and applying some algebraic manipulations we obtain \eqref{eq:thetaMEquation}. 

To show that $\theta_M \le \theta_{M-1}$ we first note that by plugging \eqref{eq:yFApdf} into \eqref{eq:MAPruleFA} it follows that $\theta_M$ is the solution of the following equation:
\begin{align}
	\frac{f_{Y|X}(y_{\text{FA}}|x=0)}{f_{Y|X}(y_{\text{FA}}|x=\Delta)} \mspace{-3mu} = \mspace{-3mu} \left(\frac{1 - \erfc \left( \sqrt{\frac{c}{2 (y_{\text{FA}} - \Delta)}}  \right) }{1 - \erfc \left( \sqrt{\frac{c}{2y_{\text{FA}} }} \right)} \right)^{\mspace{-5mu} M-1}. \label{eq:equibal_ofM}
\end{align}

\noindent Now, for $M=1$, the RHS of \eqref{eq:equibal_ofM} equals 1, and $\theta_1 \in [\Delta, \Delta + \frac{c}{3}]$. Thus, in this interval, the LHS of \eqref{eq:equibal_ofM} achieves the value 1. An explicit evaluation of the derivative of the LHS of \eqref{eq:equibal_ofM} shows that in this range the derivative is negative, and therefore the LHS of \eqref{eq:equibal_ofM} decreases with $y_{\text{FA}}$, independently of $M$. On the other hand, the RHS of \eqref{eq:equibal_ofM} increases with $M$ for all $y_{\text{FA}} \ge \Delta$. Therefore, we conclude that the solution of \eqref{eq:equibal_ofM} decreases with $M$.

Regarding the probability of error, we first note that for $y_{\text{FA}}<\Delta$, due to the causality of the arrival time, $S$ must be equal to $0$, and therefore the probability of error is zero. For $y_{\text{FA}} \ge \Delta$ we write:
\begin{align*}
		P_{\varepsilon, \text{FA}} & = 0.5 \Big( 1 - F_{Y_{\text{FA}}|X}(y_{\text{FA}}|x=\theta_M) \nonumber \\
		 & \mspace{65mu} + F_{Y_{\text{FA}}|X}(y_{\text{FA}}|x = \theta_M - \Delta) \Big).
 	\end{align*}
	
\noindent By plugging the CDF in \eqref{eq:yFAcdf} into this expression we obtain \eqref{eq:errProbSymbBySymbFA}. Finally, we note that this theorem can also be easily extended to the case of unequal a-priori symbol probabilities.
\end{proof}

\begin{example}
	Consider sending information particles with diffusion coefficient $D=10 \mu m^2 /s$, see \cite{ber-book}, and let the distance between the transmitter and the receiver be $d=4\sqrt{10} \mu m$. This implies that $c = 2 s$. We further set $\Delta = 1$, and using Prop. \ref{prop:decRuleSymbBySymb}, for $M=1$, we obtain the optimal decision threshold $\theta = 1.372$. The conditional probability densities $f_{Y|X}(y|x=0)$ and $f_{Y|X}(y|x=\Delta)$ are illustrated in Fig. \ref{fig:CondDistributions}. 
	\changeYony{Fig. \ref{fig:CondDistributions} also depicts the conditional probability distributions for $M=3$ and $M=15$. For these cases the optimal decision thresholds are $\theta_3 = 1.286$ and $\theta_{15} = 1.146$. It can be observed that when $M$ is increased the conditional PDFs concentrate towards $X=0$ and $X=\Delta$. Moreover, the tails of the conditional PDFs in the case of $M=15$ are significantly smaller than the tails in the case of $M=3$ and $M=1$. 
	Finally, note that while the tail decreases exponentially in $M$, the PDF around $X=0$ or $X=\Delta$ increases linearly with $M$, see \eqref{eq:yFApdf}.}
		\renewcommand{\figWidth}{0.9}
		\begin{figure}
    \centering
    \includegraphics[width=\figWidth\columnwidth]{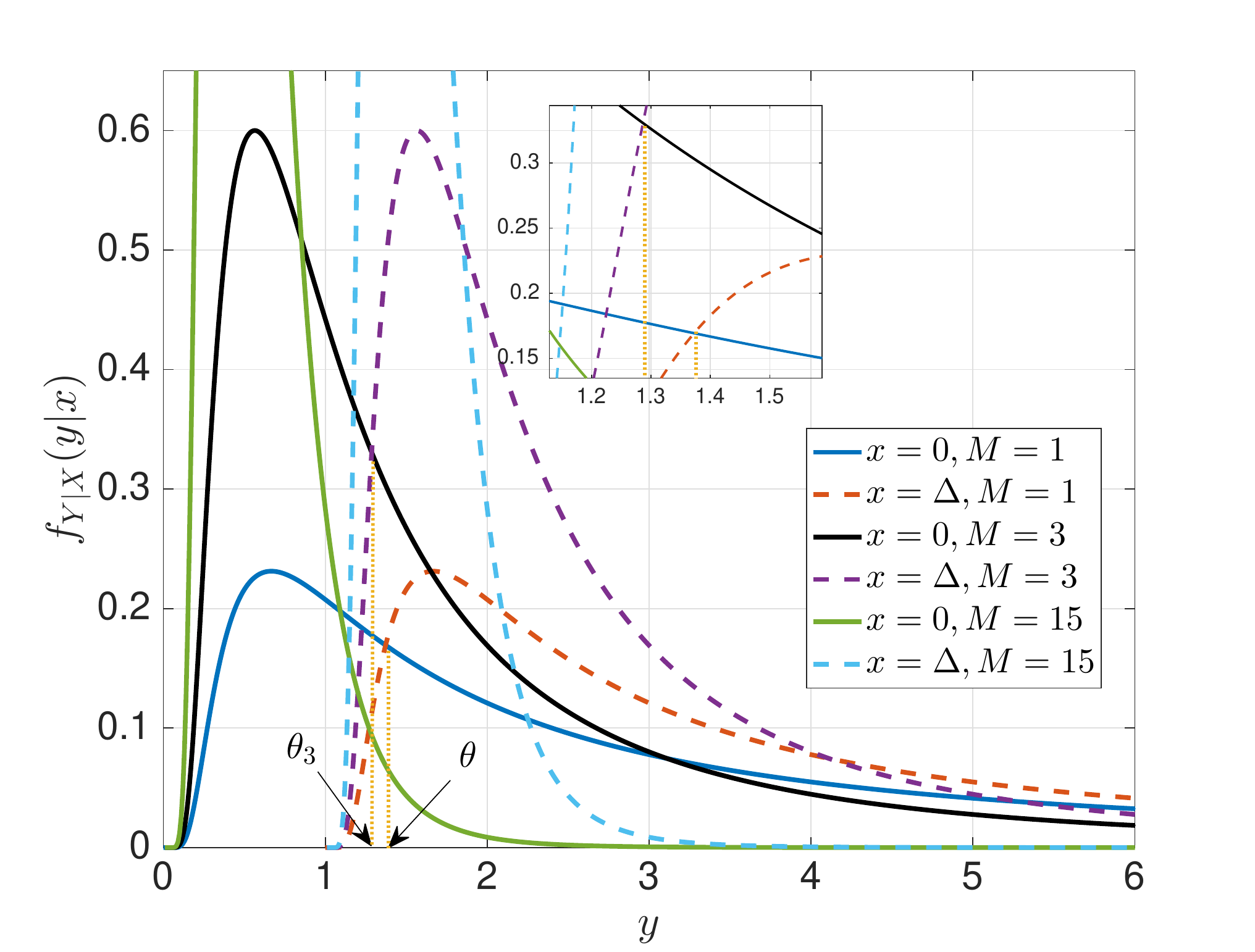}
    \captionsetup{font=footnotesize}
		\caption{The conditional probability densities $f_{Y|X}(y|x=0)$ and $f_{Y|X}(y|x=\Delta)$, for $c=2$, and $\Delta = 1$.}
    \label{fig:CondDistributions}
\end{figure}

\end{example}

\begin{remark}[{\em FA detector for $L>2$}]
	The FA detection framework can be directly extended to the case of $L>2$. In such cases the detection will be based on $L-1$ thresholds, which define the $L$ constellation points. Furthermore, as the conditional PDFs concentrate near $x$ when $M$ increases, we conclude that by increasing $M$ one can support larger $L$ for a given target probability of error.\footnote{Note that when $L>2$, then $\Pr \{S \ne \hat{S} \}$ refers to the {\em symbol} error probability.} This is demonstrated in Section \ref{sec:numRes}.  
\end{remark}

In contrast to \Thmref{thm:badlindet}, which states that for DBMT channels {\em without} drift, linear processing has worse performance for multiple particle release ($M>1$) versus single particle release ($M=1$), in \cite[Sec. IV.C.2]{sri12} it is shown that linear processing yields significant performance gain for DBMT channels {\em with} drift (modeled by the AIGN channel). Thus, a natural question that arises is: {\em Does the FA detector improve upon the linear detector for AIGN channels?} In the next subsection, and in the numerical results section, we show that this is the case.

\subsection{FA detection for the AIGN Channel}

Before comparing the FA detection framework and the linear processing proposed in \cite[Sec. IV.C.2]{sri12}, we briefly introduce the IG distribution. For a detailed discussion regarding the IG distribution we refer the reader to \cite{sri12} and \cite{chhikara-folks}. 
Consider a fluid medium {\em with drift velocity} $v$. Similarly to Section \ref{subsec:DBMTdef}, let $D$ denote the diffusion coefficient and $d$ denote the distance between the transmitter and the receiver. Moreover, let $\kappa \triangleq \frac{d}{v}$ and $\lambda \triangleq \frac{d^2}{2D}$. In this case the additive noise $Z_m$ in \eqref{eq:LevyChan} follows an inverse Gaussian distribution, $Z_m \sim \IG(\kappa, \lambda)$. The conditional PDF of the AIGN channel output $Y$, given channel input $X=x$, is given in \cite[eq. (7)]{sri12} as:
\begin{align}
	f^{\text{IG}}_{Y|X}(y|x) \mspace{-3mu} = \mspace{-3mu} \begin{cases} \sqrt{\frac{\lambda}{2 \pi (y - x)^3}} \exp \left( - \frac{\lambda (y - x - \kappa)}{2 \kappa^2 (y-x)} \right), & y \mspace{-3mu} > \mspace{-3mu} x \\ 0, & y \mspace{-3mu} \le \mspace{-3mu} x, \end{cases} \label{eq:IGcondPDF}
\end{align}

\noindent while the conditional CDF is given in \cite[eq. (22)]{sri12} as:
\begin{align}
	F^{\text{IG}}_{Y|X}(y|x) = \begin{cases} \PhiG \left( \sqrt{\frac{\lambda}{y-x}} \left(\frac{y-x}{\kappa} \mspace{-3mu} - \mspace{-3mu} 1 \right) \right) \\ \quad + \mspace{3mu} e^{\frac{2\lambda}{\kappa}} \PhiG \left( - \sqrt{\frac{\lambda}{y-x}} \left(\frac{y-x}{\kappa} \mspace{-3mu} + \mspace{-3mu} 1 \right) \right), & y \mspace{-3mu} > \mspace{-3mu} x \\ 0, & y \mspace{-3mu} \le \mspace{-3mu} x. \end{cases} \label{eq:IGcondCDF}
\end{align}

\noindent In \cite[Sec. IV.C.2]{sri12} the authors proposed to average the $M$ channel outputs as:
\begin{align}
	Y_{\text{LIN}} = \frac{1}{M} \sum_{m=1}^M Y_m \stackrel{(a)}{=} X + Z_{\text{LIN}}, \label{eq:avgIG}
\end{align}

\noindent where (a) follows by defining $Z_{\text{LIN}} \triangleq \frac{1}{M} \sum_{m=1}^{M} Z_m$, where $Z_{\text{LIN}} \sim \IG(\kappa, M \cdot \lambda)$. Then, $\hat{X}_{\text{LIN}}$ is detected from $y_{\text{LIN}}$ as in \eqref{eq:lin_detector}. 
Note that the variance of an IG-distributed RV is given by $\frac{\kappa^3}{\lambda}$. Therefore, compared to the single particle case, the averaging in \eqref{eq:avgIG} decreases the variance by a factor of $M$, which partially explains the performance improvement, compared to the case of $M=1$, reported in \cite[Fig. 9]{sri12}. 

Let $f^{\text{IG}}_{Y_{\text{FA}}|X}(y_{\text{FA}}|x=0)$ denote the conditional PDF of $Y_{\text{FA}}$, given $X=0$. 
This PDF can be obtained by following the steps leading to \eqref{eq:yFApdf}, and using the PDF and CDF of the IG distribution given in \eqref{eq:IGcondPDF} and \eqref{eq:IGcondCDF}, respectively. 
To qualitatively compare the FA detector and the linear detector presented in \cite[Sec. IV.C.2]{sri12}, in the case of the AIGN channel, we propose to examine the tails of $f^{\text{IG}}_{Y_{\text{LIN}}|X}(y_{\text{LIN}}|X=0)$ and $f^{\text{IG}}_{Y_{\text{FA}}|X}(y_{\text{FA}}|X=0)$. 
Both PDFs are more concentrated around $X=0$ than $f^{\text{IG}}_{Y_m|X}(y|x=0)$. While in the case of the linear detector this is a result of the lower variance, in the case of FA this is a result of the multiplication by the exponential term, see \eqref{eq:yFApdf}. Our analysis shows that $f^{\text{IG}}_{Y_{\text{FA}}|X}(y_{\text{FA}}|x=0)$ is more concentrated around $X = 0$ than $f^{\text{IG}}_{Y_{\text{LIN}}|X}(y_{\text{LIN}}|X=0)$, as indicated in the following example.

\begin{example}
To obtain some intuition why the FA detector improves upon the linear detector in \eqref{eq:avgIG}, Fig. \ref{fig:CondDistributionsIG} depicts $f^{\text{IG}}_{Y_m|X}(y|x=0), f^{\text{IG}}_{Y_{\text{LIN}}|X}(y_{\text{LIN}}|x=0)$, and $f^{\text{IG}}_{Y_{\text{FA}}|X}(y_{\text{FA}}|x=0)$, for $\gamma = 1, \kappa = 1$ and $M=4$. It can be observed that $f^{\text{IG}}_{Y_{\text{FA}}|X}(y_{\text{FA}}|x=0)$ is significantly more concentrated towards the origin compared to $f^{\text{IG}}_{Y_{\text{LIN}}|X}(y_{\text{LIN}}|x=0)$. Thus, as the detector compares two shifted versions of the same PDF, this leads to a lower probability of error. 
It can further be observed that the tail of $f^{\text{IG}}_{Y_{\text{LIN}}|X}(y_{\text{LIN}}|x=0)$ is smaller than the tail of $f^{\text{IG}}_{Y_m|X}(y|x=0)$ which is reflected in the smaller variance. This supports the performance gain of the linear detector compared to ML detection for $M=1$.
		\renewcommand{\figWidth}{0.9}
\begin{figure}
    \centering
    \includegraphics[width=\figWidth\columnwidth]{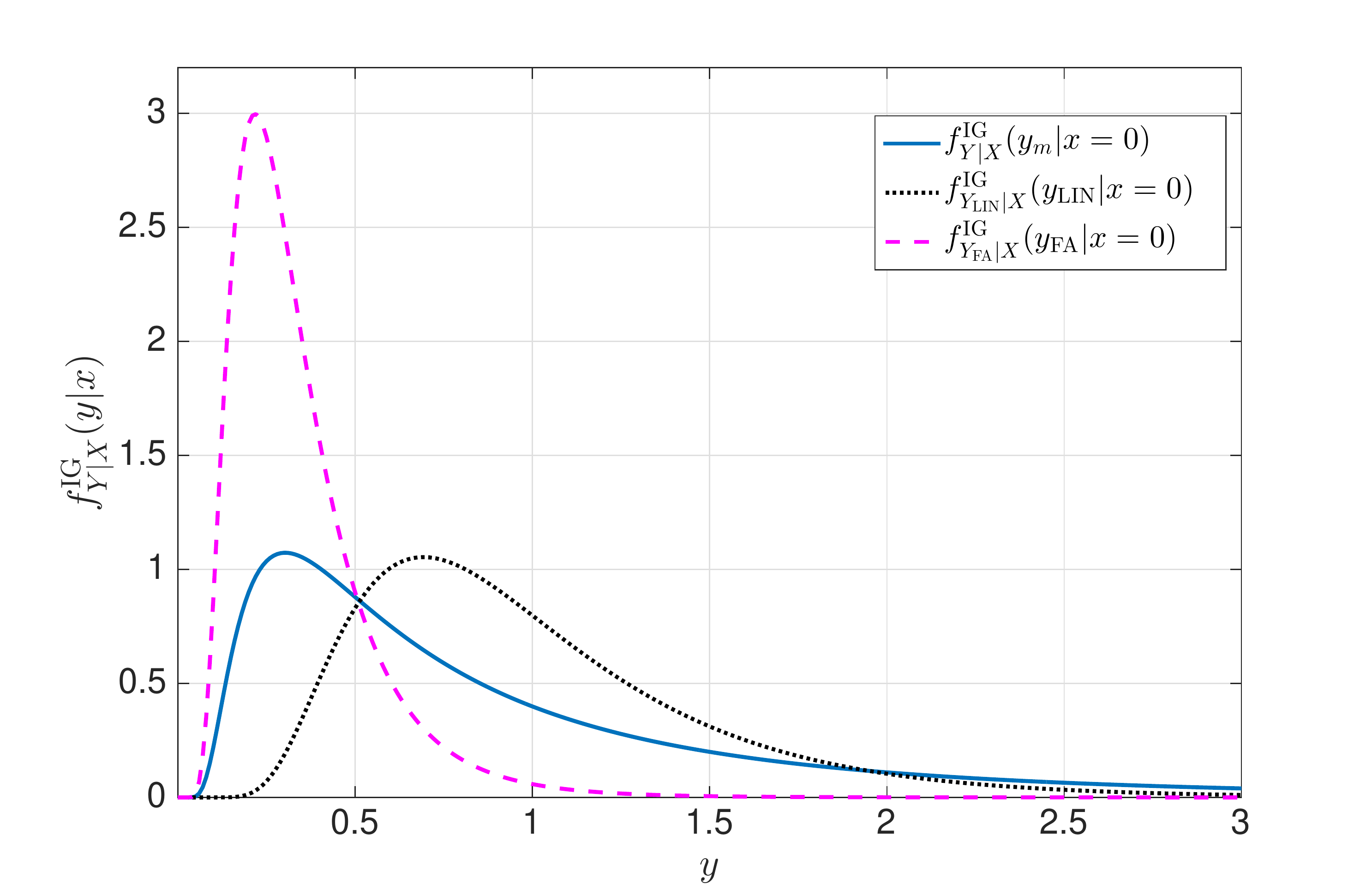}
    \captionsetup{font=footnotesize}
		\caption{DBMT channel with drift - The conditional probability densities $f^{\text{IG}}_{Y_m|X}(y|x=0), f^{\text{IG}}_{Y_{\text{LIN}}|X}(y_{\text{LIN}}|x=0)$, and $f^{\text{IG}}_{Y_{\text{FA}}|X}(y_{\text{FA}}|x=0)$, for $\lambda = 1, \kappa = 1$ and $M=4$.}
    \label{fig:CondDistributionsIG}
\end{figure} 
\end{example}

Finally, we note that while Fig. \ref{fig:CondDistributionsIG} provides a qualitative explanation for the superiority of the FA detector compared to the linear detector, in Fig. \ref{fig:IG_PeVsVelocity_DifMs} we provide simulation results which support this observation.

\section{Performance Comparison of the ML and FA Detectors} \label{subsec:PerfCompare}

\changeYony{In this section we compare the probability of error of the FA detector to the probability of error of the ML detector. Clearly, $Y_{\text{FA}}$ is {\em not} a sufficient statistic for decoding based on $\yvec$, yet, our numerical simulations indicate that for low values of $M$ (up to the order of tens), these detectors have almost equivalent performance. On the other hand, when $M$ is large (i.e., $M \to \infty$), we use error exponent analysis to show the superiority of the ML detector over the FA detector.}

\subsection{Small $M$}
	To study the performance gap between the two detectors, when $M$ is small, we derive an upper bound on the probability that there is a mismatch between the decisions of the two detectors. More precisely, let $P_{\text{mm}} \triangleq \Pr \{\hat{S}_{\text{ML}}(\yvec) \neq \hat{S}_{\text{FA}}(\yvec)\}$. The following theorem upper bounds $P_{\text{mm}}$:
	\begin{theorem} \thmlabel{thm:PmmUB}
		Let $g(x) \triangleq \log \left( \frac{x - \Delta}{x} \right) + \frac{c\Delta}{3 x (x-\Delta)}, x > \Delta$. The equation $g(x)=0$ has a unique solution $x^{\ast}$, where $g(x) > 0$ for $\Delta < x < x^{\ast}$, and $g(x) < 0$ for $x^{\ast} < x$. Furthermore, the mismatch probability is upper bounded by:
	\begin{align}
		P_{\text{mm}} \le P_{\text{mm}}^{\text{(ub)}} & = 0.5 \sum_{i=0}^1 \Big\{ \Psi \left( c, M, x^{\ast} - i\cdot \Delta \right) \nonumber \\
			& \mspace{110mu} - \Psi \left( c, M, \bar{i} \Delta \right) \Big\}, \label{eq:Pmm_ub}
	\end{align}
	
	\noindent where $\bar{0} = 1$, and $\bar{1} = 0$.
\end{theorem}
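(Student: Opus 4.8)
The plan is to first settle the claim about $g$, and then to localize the disagreement between the two detectors to a single interval of arrival times, so that $P_{\text{mm}}$ is controlled by the probability that the first arrival falls in that interval.

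For the statement about $g$, I would multiply by the positive quantity $x(x-\Delta)$ and observe that $x(x-\Delta)\, g(x) = \frac{c\Delta}{3} - x(x-\Delta) \log\!\left(\frac{x}{x-\Delta}\right)$, so the zeros of $g$ on $(\Delta,\infty)$ coincide exactly with the solutions of \eqref{eq:thetaEquation}. By Proposition \ref{prop:decRuleSymbBySymb} (and the uniqueness argument in the appendix) that equation has a single root; hence $g$ has a unique zero, which is precisely $x^{\ast} = \theta = \theta_1$. The sign pattern then follows from the boundary behaviour: as $x \to \Delta^{+}$ the term $\frac{c\Delta}{3x(x-\Delta)}$ blows up like $1/(x-\Delta)$ and dominates the logarithmic singularity, so $g(x) \to +\infty$; as $x \to \infty$ one has $\log(1-\Delta/x) \sim -\Delta/x$, which dominates $\frac{c\Delta}{3x^2}$, so $g(x) \to 0^{-}$. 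Uniqueness of the zero then forces $g>0$ on $(\Delta,x^{\ast})$ and $g<0$ on $(x^{\ast},\infty)$.

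Next I would rewrite the two decision rules in a common form. By Proposition \ref{prop:MLdetectorMultiple}, $\hat{S}_{\text{ML}}(\yvec)=1$ exactly when every $y_m>\Delta$ and $\sum_{m=1}^{M} g(y_m) \le 0$, and equals $0$ otherwise; by Theorem \ref{thm:FA_detector}, $\hat{S}_{\text{FA}}(y_{\text{FA}})=1$ exactly when $y_{\text{FA}} \ge \theta_M$. The ordering established there, $\Delta \le \theta_M \le \theta_1 = \theta = x^{\ast}$, is the key structural fact I would exploit.

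The core step is to show the detectors agree outside the band $[\Delta,x^{\ast})$. If $y_{\text{FA}} \ge x^{\ast}$, then every $y_m \ge x^{\ast} > \Delta$, so the sign property gives $g(y_m) \le 0$ for all $m$, hence $\sum_m g(y_m)\le 0$ and $\hat{S}_{\text{ML}}=1$; simultaneously $y_{\text{FA}} \ge x^{\ast} \ge \theta_M$ gives $\hat{S}_{\text{FA}}=1$. If $y_{\text{FA}} < \Delta$, then some $y_m<\Delta$, forcing $\hat{S}_{\text{ML}}=0$ by causality, while $y_{\text{FA}} < \Delta \le \theta_M$ gives $\hat{S}_{\text{FA}}=0$. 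Thus $\{\hat{S}_{\text{ML}} \ne \hat{S}_{\text{FA}}\} \subseteq \{\Delta \le Y_{\text{FA}} < x^{\ast}\}$, and consequently $P_{\text{mm}} \le \Pr\{\Delta \le Y_{\text{FA}} < x^{\ast}\}$. Conditioning on the equiprobable input and substituting the order-statistic CDF \eqref{eq:yFAcdf}, namely $\Psi(c,M,\cdot)$ (together with $\Psi(c,M,0)=F_{Y_{\text{FA}}|X}(\Delta|\Delta)=0$), turns this probability into \eqref{eq:Pmm_ub}. I expect the main obstacle to be the agreement argument on $\{y_{\text{FA}} \ge x^{\ast}\}$: it rests on the per-sample sign of $g$ combined with the ordering $\theta_M \le x^{\ast}$, and once these are in place the remaining steps are routine manipulations of the first-arrival CDF.
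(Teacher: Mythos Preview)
Your argument is correct and rests on the same key observation as the paper: once every $y_m \ge x^{\ast}$, the sign pattern of $g$ forces $\sum_m g(y_m)\le 0$, so both detectors output $1$, and hence any mismatch requires $\Delta \le Y_{\text{FA}} < x^{\ast}$. The paper arrives at the same bound, but via a longer route: it first proves uniqueness of the zero of $g$ from scratch by computing $g'$, locating the roots of the resulting quadratic, and checking which lie in $(\Delta,\infty)$; and it splits the mismatch event at $\theta_M$, writing separate integrals over $\{\Delta<y_{\text{FA}}<\theta_M\}$ and $\{\theta_M<y_{\text{FA}}<x^{\ast}\}$ (with auxiliary sets $\mathcal{B}_1,\mathcal{B}_2$ for the remaining coordinates), bounding the inner integrals by $1$, and then letting the $\theta_M$ terms telescope. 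Your identification $x^{\ast}=\theta_1$ via \eqref{eq:thetaEquation} is a nice shortcut the paper does not make explicit, and it immediately gives the needed ordering $\theta_M\le\theta_1=x^{\ast}$ from Theorem~\ref{thm:FA_detector}; your direct containment $\{\hat{S}_{\text{ML}}\neq\hat{S}_{\text{FA}}\}\subseteq\{\Delta\le Y_{\text{FA}}<x^{\ast}\}$ then bypasses the split at $\theta_M$ entirely. Both approaches yield exactly \eqref{eq:Pmm_ub}, but yours is more economical.
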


\begin{proof}
	The proof is provided in Appendix \ref{annex:thm_PmmUB_proof}.
\end{proof}

\begin{remark}[{\em Tightness of the bound in \eqref{eq:Pmm_ub}}]
	Recall that $f_{Y_{\text{FA}}|X}(y|x=x_0)$ concentrates towards $x=x_0$ when $M$ increases. On the other hand, $x^{\ast}$ and $\Delta$ are independent of $M$ and depend on the propagation of a {\em single} particle. Therefore, when $M$ increases, the upper bound in \eqref{eq:Pmm_ub} becomes loose. 
	We further note that the upper bound in \eqref{eq:Pmm_ub} is tightened when $\Delta$ is increased.
	For instance, let $M=2, c=1$, and $\Delta = 1$. For this setting $P_{\varepsilon,\text{ML}} = 0.2174, P_{\varepsilon,\text{FA}} = 0.2186$, and $P_{\text{mm}}^{\text{(ub)}} = 0.0283$. If we increase $\Delta$ to be equal to $5$ we obtain: $P_{\varepsilon,\text{ML}} = 0.05896, P_{\varepsilon,\text{FA}} = 0.05898$, and $P_{\text{mm}}^{\text{(ub)}} = 0.0012$. 
	On the other hand, for larger values of $M$, e.g., $M=5$, we have $P_{\varepsilon,\text{ML}} = 0.06501, P_{\varepsilon,\text{FA}} = 0.06554, P_{\text{mm}}^{\text{(ub)}} = 0.0337$, for $\Delta = 1$, and $P_{\varepsilon,\text{ML}} = 0.002403, P_{\varepsilon,\text{FA}} = 0.002408, P_{\text{mm}}^{\text{(ub)}} = 0.001$, for $\Delta = 5$.
\end{remark}

For large values of $M$, we next analyze the error exponents of the FA and ML detectors, and show that in this regime the ML detector significantly outperforms the FA detector.

\subsection{Large $M$} \label{subsec:LargeM}

Let $P_{\varepsilon}^{(M)}$ denote the probability of error of a given detector, as a function of $M$. The error exponent is then given by:
\begin{align}
	\mathsf{E} = \lim_{M \to \infty} - \frac{\log P_{\varepsilon}^{(M)}}{M}. \label{eq:errExpDef}
\end{align}

\changeYony{
\begin{remark}
	From \Thmref{thm:badlindet} it follows that for the linear detector the probability of error {\em does not} decrease when $M$ is increased, namely, $P_{\varepsilon, \text{LIN}}^{(M)} \ge P_{\varepsilon}$. Therefore, the definition of the error exponent in \eqref{eq:errExpDef} implies that $\mathsf{E}_{\text{LIN}} = 0$.
\end{remark}
}

In the following we first derive the error exponent of the FA detector, and then numerically compare it to the exponent of the ML detector. This numerical comparison indicates that the error exponent of the ML detector is higher than that of the FA detector. This implies that the two detectors are not equivalent, even though for low values of $M$ they achieve very similar performance based on our simulation results. This performance gap is due to the fact that the first arrival {\em is not a sufficient statistic} for optimal decoding based on the received vector $\yvec$.
The following theorem presents the error exponent of the FA detector:
\begin{theorem} \label{thm:ErrExp_FA}
	The error exponent of the FA detector is given by:
	\begin{align}
		\mathsf{E}_{\text{FA}} = - \log \left( 1 - \erfc \left( \sqrt{\frac{c}{2 \Delta}} \right) \right). \label{eq:FA_errExp}
	\end{align}
\end{theorem}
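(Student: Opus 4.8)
The plan is to start from the closed form of the FA error probability in \eqref{eq:errProbSymbBySymbFA} and treat its two summands separately. Write $P_{\varepsilon,\text{FA}}^{(M)} = \tfrac12\big(p_0^{(M)} + p_\Delta^{(M)}\big)$, where $p_0^{(M)} = \big(1-\erfc(\sqrt{c/(2\theta_M)})\big)^M$ is the conditional error given $X=0$ and $p_\Delta^{(M)} = 1-\big(1-\erfc(\sqrt{c/(2(\theta_M-\Delta))})\big)^M$ is the conditional error given $X=\Delta$. Since $\max(p_0^{(M)},p_\Delta^{(M)}) \le p_0^{(M)}+p_\Delta^{(M)} \le 2\max(p_0^{(M)},p_\Delta^{(M)})$, the constants $\tfrac12$ and $2$ disappear after dividing by $M$ and letting $M\to\infty$ in \eqref{eq:errExpDef}; hence the exponent of the sum is the smaller of the two individual exponents. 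The task thus reduces to (i) computing $\mathsf{E}_0 = \lim_M -\tfrac1M\log p_0^{(M)}$ and (ii) showing $\liminf_M -\tfrac1M\log p_\Delta^{(M)} \ge \mathsf{E}_0$.

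The first ingredient is the limit of the optimal threshold. By \thmref{thm:FA_detector} the sequence $\theta_M$ is nonincreasing and bounded below by $\Delta$, so it converges to some $\theta_\infty \ge \Delta$, and I would show $\theta_\infty = \Delta$. Writing the defining likelihood-ratio condition in the form \eqref{eq:equibal_ofM}, its right-hand side is $(B_M/A_M)^{M-1}$ with $A_M = 1-\erfc(\sqrt{c/(2\theta_M)})$ and $B_M = 1-\erfc(\sqrt{c/(2(\theta_M-\Delta))})$. If $\theta_\infty > \Delta$, then $B_M/A_M \to B_\infty/A_\infty > 1$, so the right-hand side diverges, whereas the left-hand side $f_{Y|X}(\theta_M|0)/f_{Y|X}(\theta_M|\Delta)$ converges to a finite positive constant, a contradiction. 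Hence $\theta_M \downarrow \Delta$, and by continuity $\mathsf{E}_0 = \lim_M -\log A_M = -\log\big(1-\erfc(\sqrt{c/(2\Delta)})\big)$, which is exactly the claimed value \eqref{eq:FA_errExp}.

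The heart of the argument is step (ii). Set $\epsilon_M = \theta_M - \Delta \to 0$. Using $1-(1-b)^M \le Mb$ together with $\erfc(x)\le e^{-x^2}$ gives $p_\Delta^{(M)} \le M\,\erfc(\sqrt{c/(2\epsilon_M)}) \le M\,e^{-c/(2\epsilon_M)}$, so that $-\tfrac1M\log p_\Delta^{(M)} \ge -\tfrac{\log M}{M} + \tfrac{c}{2M\epsilon_M}$, and it suffices to lower bound $c/(2\epsilon_M)$. Taking logarithms in the crossing condition and inserting the L\'evy density from \eqref{eq:yFApdf} produces the exact identity
\begin{align*}
	\frac{c}{2\epsilon_M} = (M-1)\big(\log B_M - \log A_M\big) + \tfrac32\log\frac{\theta_M}{\epsilon_M} + \frac{c}{2\theta_M}.
\end{align*}
For $M$ large enough that $\theta_M > \epsilon_M$ the last two terms are nonnegative and may be discarded, giving $\tfrac{c}{2M\epsilon_M} \ge \tfrac{M-1}{M}\big(\log B_M - \log A_M\big)$. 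Since $A_M \to 1-\erfc(\sqrt{c/(2\Delta)})$ and $B_M \to 1$, the right-hand side tends to $-\log\big(1-\erfc(\sqrt{c/(2\Delta)})\big) = \mathsf{E}_0$, whence $\liminf_M -\tfrac1M\log p_\Delta^{(M)} \ge \mathsf{E}_0$.

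Combining the two steps, the exponent of the sum equals the smaller of the two, which is $\mathsf{E}_0 = -\log\big(1-\erfc(\sqrt{c/(2\Delta)})\big)$, establishing \eqref{eq:FA_errExp}; note that it is the $X=0$ (heavy-tail) error that dictates the exponent, consistent with the channel asymmetry of Remark~\ref{rem:asymmetry}. I expect the main obstacle to be the clean handling of $\theta_M\to\Delta$ in step two — in particular the divergence/finiteness contradiction — and verifying that the two discarded terms in the identity above are genuinely nonnegative for all large $M$. A pleasant feature of this route is that it sidesteps delicate $(M-1)\times o(1)$ estimates: by keeping $\log B_M - \log A_M$ as a single factor that converges to $\mathsf{E}_0$ and multiplying it only by $\tfrac{M-1}{M}\to 1$, I never need the precise rate at which $\theta_M\to\Delta$, only that the lower-order terms have the correct sign so that they can be dropped.
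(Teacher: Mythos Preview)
Your proposal is correct and follows the same high-level decomposition as the paper---split $P_{\varepsilon,\text{FA}}^{(M)}$ into the two conditional errors, use $\theta_M\downarrow\Delta$ to evaluate the first, and then control the second---but the treatment of the second term is genuinely different and, in fact, cleaner. The paper (Appendix~\ref{annex:thm_ErrExp_FA_proof}) posits an ansatz $\delta_M\approx d_1/M$, plugs it into the threshold equation using several ``$\approx$'' steps (including $\erfc(x)\approx e^{-x^2}$ and the limit form of the exponential), solves for $d_1$, and then computes the exponent of $b_M$ explicitly to find it also equals $\mathsf{E}_0$. By contrast, you never need the precise rate: you extract the exact identity $c/(2\epsilon_M) = (M-1)(\log B_M-\log A_M)+\tfrac32\log(\theta_M/\epsilon_M)+c/(2\theta_M)$ directly from the likelihood-ratio crossing, drop two manifestly nonnegative terms, and pass to the limit using only $A_M\to 1-\erfc(\sqrt{c/(2\Delta)})$, $B_M\to 1$, and the elementary bounds $1-(1-b)^M\le Mb$ and $\erfc(x)\le e^{-x^2}$. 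The result is a rigorous $\liminf$ bound that suffices because the $X=0$ term already pins the exponent from above. What the paper's route buys is the extra information $\delta_M\sim d_1/M$ with an explicit $d_1$; what your route buys is a proof free of heuristic approximations and the unverified scaling ansatz. Your contradiction argument for $\theta_\infty=\Delta$ is also an addition over the paper, which simply asserts $\theta_M\to\Delta$.
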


\begin{proof}[Proof Outline]
	Recall the probability of error of the FA detector in \eqref{eq:errProbSymbBySymbFA}, repeated here for ease of reference:
	\begin{align*}
		P_{\varepsilon, \text{FA}} &= 0.5 \bigg( \left( 1 - \erfc \left( \sqrt{\frac{c}{2\theta_M}} \right) \right)^M \nonumber \\
		& \mspace{65mu} + 1 - \left(1 - \erfc \left( \sqrt{\frac{c}{2(\theta_M - \Delta)}} \right) \right)^M \bigg).
	\end{align*}
	
	\noindent Based on the observations in Remark \ref{rem:asymmetry}, and noting that both PDFs are right-sided, namely, different than zero only for $y>x$, we intuitively expect the first term on the RHS of \eqref{eq:errProbSymbBySymbFA} to be larger than the second term. 
	In this case, the error exponent of the FA detector is governed by the first term, and as $\theta_M \to \Delta$ when $M \to \infty$, we obtain \eqref{eq:FA_errExp}. 
	In Appendix \ref{annex:thm_ErrExp_FA_proof} we rigorously analyze the scaling behavior of the second term in \eqref{eq:FA_errExp} and show that it yields {\em the same error exponent as the first term}, thus, leading to \eqref{eq:FA_errExp}.
\end{proof}

Next, we discuss the error exponent of the ML detector. Deriving a closed form expression for this error exponent seems intractable, therefore, we present an implicit expression and evaluate it numerically.
The problem of recovering $x$ based on the $M$ i.i.d. realizations $\{y_m\}_{m=1}^M$ belongs to the class of binary hypothesis problems, which are studied in \cite[Ch. 11]{cover-book}. In particular, the error exponent for the probability of error is exactly the Chernoff information \cite[Theorem 11.9.1]{cover-book}. 
We emphasize that this optimal error exponent is independent of the prior probabilities associated with the two values of the transmitted symbol $x$, see the discussion in \cite[pg. 388]{cover-book}. Thus, the assumption of equiprobable bits places no limitation on the error exponent of the ML detector.

Let $\pi_0>0$ and $\pi_{\Delta}>0$ denote a-prior probabilities for sending $x=0$ and $x=\Delta$, respectively, for a fixed $M$. Furthermore, let $g_{0}(y)$ and $g_{\Delta}(y)$ denote the likelihood functions corresponding to $x=0$ and $x=\Delta$, respectively. Finally, let $\mathtt{I}(\text{``condition"})$ denote the indicator function which takes the value 1 if the ``condition" is satisfied and zero otherwise.
Since given $x$ the $\{y_m\}_{m=1}^M$ are independent, it follows that the probability of error of the ML detector, as a function of $M$, can 	be written as:
\begin{align}
 P_{\varepsilon, \text{ML}}^{(M)} & \mspace{-3mu} = \mspace{-3mu}  \pi_0 \mspace{-3mu} \int_{\mathbf{y}} \prod_{m=1}^{M} \mspace{-3mu} g_0(y_m) \mathtt{I} \mspace{-3mu} \left(\prod_{m=1}^M \mspace{-6mu} g_0(y_m) \mspace{-3mu} < \mspace{-3mu} \prod_{m=1}^M \mspace{-6mu} g_{\Delta} (y_m)\right) d\mathbf{y} \nonumber \\
& \mspace{8mu} + \mspace{-3mu} \pi_{\Delta} \mspace{-6mu} \int_{\mathbf{y}} \prod_{m=1}^M \mspace{-3mu} g_{\Delta}(y_m) \mathtt{I} \mspace{-3mu} \left(\prod_{m=1}^M \mspace{-6mu} g_0(y_m) \mspace{-3mu} > \mspace{-3mu} \prod_{m=1}^M \mspace{-6mu} g_{\Delta} (y_m) \mspace{-3mu} \right) \mspace{-3mu} d\mathbf{y}. \label{eq:Pe_ML_ErrExp}
\end{align}

\noindent Next, we define:
\begin{equation*}
	\mathtt{J}_{M} \triangleq  \int_{\mathbf{y}} \min \left\{ \prod_{m=1}^M g_0(y_m), \prod_{m=1}^M g_{\Delta}(y_m) \right\} d\mathbf{y},
\end{equation*}
 
\noindent and note that \eqref{eq:Pe_ML_ErrExp} satisfies:
\begin{align}
  \label{eq:1}
\min \left\{ \pi_0, \pi_{\Delta} \right\} \mathtt{J}_M \leq P_{\varepsilon, \text{ML}}^{(M)} \leq \max \left\{\pi_0, \pi_{\Delta} \right\} \mathtt{J}_M.
\end{align}

\noindent Observe that for fixed $\pi_0$ and $\pi_{\Delta}$, the error exponent $\lim_{M\rightarrow \infty} \frac{-\log(P_{\varepsilon,\text{ML}}^{(M)})}{M}$ equals the error exponent of $\mathtt{J}_M$, namely,
\begin{align*}
	\lim_{M\rightarrow \infty} \frac{-\log(P_{\varepsilon, \text{ML}}^{(M)})}{M} = \lim_{M\rightarrow \infty} \frac{-\log(\mathtt{J}_M)}{M},
\end{align*} 

\noindent which is exactly the {\em Chernoff information}, see \cite[pg. 387]{cover-book}. We further write:
\begin{align*}
  \mathtt{J}_M & =  \int_{\mathbf{y}} \min \left\{ \prod_{m=1}^M g_0(y_m), \prod_{m=1}^M g_{\Delta}(y_m) \right\} d\mathbf{y} \\
	& \stackrel{(a)}{\leq} \min_{s:0 \leq s \leq 1} \int_{\mathbf{y}} \left( \prod_{m=1}^M g_0(y_m) \right)^{s} \left( \prod_{m=1}^M g_{\Delta}(y_m) \right)^{(1-s)} d\mathbf{y} \\
	& \stackrel{(b)}{\leq} e^{-M \mathsf{E}_{\text{ML}}}, 
\end{align*}

\noindent where (a) follows from the fact that for any positive numbers $a,b$ and a real number $s \in [0, 1]$, we have $ \min(a, b) \leq a^sb^{1-s}$, and (b) follows from defining:
\begin{align}
	\mathsf{E}_{\text{ML}} \triangleq - \min_{s: 0 \leq s \leq 1} \log\left(\int_{y} g^{s}_0(y) g^{1-s}_{\Delta}(y) dy\right). \label{eq:ML_errExp}
\end{align}

\changeYony{
\noindent The above argument establishes an upper  bound on the error exponent $\lim_{M \rightarrow \infty} \frac{-\log(\mathtt{J}_M)}{M}.$  A lower bound follows directly from \cite[Theorem 11.9.1]{cover-book}, namely, the best achievable exponent. The two bounds coincide as $M \rightarrow \infty$, see the discussion in \cite[pgs. 387--389]{cover-book}. Thus, we conclude that the error exponent of the ML detector is given in \eqref{eq:ML_errExp}.
}

\begin{example}
		In contrast to \eqref{eq:FA_errExp}, deriving a closed form expression for the error exponent \eqref{eq:ML_errExp} seems intractable, hence, we numerically evaluated it. Table \ref{tab:ErrExp} details both $\mathsf{E}_{\text{ML}}$ and $\mathsf{E}_{\text{FA}}$ for $\Delta \in \{0.1, 0.2, 0.3, 0.4\}$, and $c \in \{0.5, 1, 2\}$. 
		Note that when $M$ increases, very small values of $\Delta$ can be used. For instance, for $M=2\cdot 10^4, c=2$ and $\Delta=0.1$, we obtain $P_{\varepsilon,\text{FA}} = 2\cdot 10^{-4}$.
		It can be observed that for small values of $\Delta$, and large values of $c$, the relative difference between the two error exponents is larger. 
		\begin{table}[h]
		\begin{center}
		\footnotesize
		\begin{tabular}[t]{|c|c|c|c|c|}
			\hline
			  & $\Delta=0.1$  & $\Delta=0.2$ & $\Delta=0.3$ & $\Delta=0.4$  \\
			\hline
			\hline
			$ c= 0.5, \mathsf{E}_{\text{ML}}$  & 0.044106  & 0.132051  & 0.223149  & 0.306514  \\
			\hline
			$ c= 0.5, \mathsf{E}_{\text{FA}}$  & 0.025674  & 0.120865  & 0.219034  & 0.305917  \\
			\hline 
			\hline
			$ c= 1, \mathsf{E}_{\text{ML}}$  & 0.012413  & 0.044103  & 0.086111  & 0.132012  \\
			\hline
			$ c= 1, \mathsf{E}_{\text{FA}}$  & 0.001567  & 0.025674  & 0.070304  & 0.120865  \\
			\hline 
			\hline
			$ c= 2, \mathsf{E}_{\text{ML}}$  & 0.003230  & 0.012413  & 0.026441  & 0.044099  \\
			\hline
			$ c= 2, \mathsf{E}_{\text{FA}}$  & 0.000008  & 0.001567  & 0.009872  & 0.025674  \\
			\hline
		\end{tabular}
		\captionsetup{font=small}
		\caption{$\mathsf{E}_{\text{ML}}$ and $\mathsf{E}_{\text{FA}}$ for different values of $\Delta$ and $c$. \label{tab:ErrExp}}
		\vspace{-0.5cm}
	\end{center}
\end{table}

\end{example}

\section{Non-Binary Constellations} \label{sec:beyondBinary}

In this section we study communication over DBMT channels when $|\mX| = 2^L, L > 1$. We restrict our attention to the FA detection framework due to the complexity of the ML analysis. 
Let $L$ be a fixed number of bits to be transmitted, $\Delta$ a fixed time interval, and $\{\xi_i \}_{i=0}^{2^L - 1}$ a set of distinct points in the interval $[0, \Delta]$.
One can send the $L$ bits by releasing the $M$ particles at one of $\xi_i$ time points. 
The results of Section \ref{subsec:FA}, indicating that simultaneous release of multiple particles can dramatically decrease the probability of error in the binary case, also apply in this non-binary case. Therefore, for a fixed $L$, one can achieve a desired (symbol) probability of error by increasing the number of released particles. 
On the other hand, for a fixed $M$, increasing $L$ increases the number of bits conveyed in each symbol at the cost of smaller spacing between the $\xi_i$'s. This leads to two questions associated with the non-binary case:
\begin{itemize}
	\item 
		{\em What is the complexity of the FA detection in the case of $L>1$? Does it grow exponentially with $L$?} 
		\changeYony{We show that given a simple choice of the points $\{\xi_i \}_{i=0}^{2^L - 1}$, the FA detector for the case of $L>1$ amounts to the FA detector presented in Thm. \ref{thm:FA_detector}.}
				
	\item
		{\em What is the scaling behavior of $L$ as a function of $M$, which insures a decreasing symbol error probability?} We show that if $L$ scales at most as $\log \log M$ then the symbol error probability decreases to zero when $M, L \to \infty$.
	
\end{itemize}

We begin with formally introducing the transmission scheme. 
The transmitter divides the interval $[0,\Delta]$ into $2^L - 1$ equal-length sub-intervals. Let $\tilde{\Delta}$ be the length of each such sub-interval. The constellation points (release times) are given by $n \cdot \tilde{\Delta}, n=0,1,\dots,2^L-1$.
Observing a sequence of $L$ equiprobable and independent bits, the transmitter uses a predefined bits-to-symbol mapping and releases $M$ particles at the corresponding time. While in this work we focus on the {\em symbol} error rate, the bit error rate can be easily obtained via a bits-to-symbol mapping such as Gray coding \cite{agrell2004} and the approximation that a symbol error leads to a single bit error. The transmission scheme is illustrated in Fig. \ref{fig:BeyondBinarry}.
		\renewcommand{\figWidth}{0.9}
\begin{figure}[t]
	\begin{center}
		\includegraphics[width=\figWidth\columnwidth,keepaspectratio]{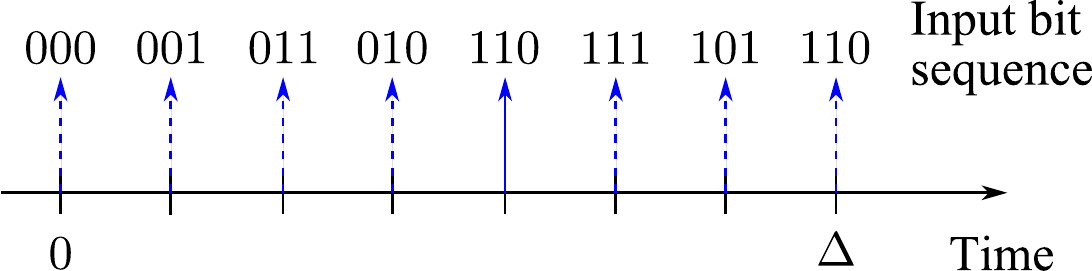}
	\end{center}
	\vspace{-0.3cm}
	\captionsetup{font=footnotesize}
	\caption{\label{fig:BeyondBinarry} Illustration of transmission when $L=3$. The bits-to-symbol mapping is the commonly-used Gray coding.
	Assuming the binary input sequence $110$ the transmitter releases $M$ particles at time $4\tilde{\Delta} = \frac{4\Delta}{7}$. The dashed arrows indicate other possible release times and the respective bit tuples.} 
	\vspace{-0.35cm}
\end{figure}

Let $\omega_M$ denote the mode of the density of the FA, given in \eqref{eq:yFApdf}, assuming the offset parameter is zero.\footnote{Note that the unimodality of the density in \eqref{eq:yFApdf} follows from the fact that the density of the L\'evy distribution is unimodal.}
Furthermore, let $n_{\text{floor}}(y_{\text{FA}}) \triangleq \left\lfloor \frac{(y_{\text{FA}} - \omega_M)}{\tilde{\Delta}} \right\rfloor$ and let $\hat{S}_{\text{FA}}^{\Delta^{\ast}}(y_{\text{FA}})$ denote the optimal FA detector stated in Thm. \ref{thm:FA_detector}, for the binary case, when the spacing between the two possible release times is $\Delta^{\ast}$. 
The optimal detector, based on $y_{\text{FA}}$ is presented in the following theorem.
\begin{theorem} \label{thm:nonBin_FA_Det}
	The decision rule that minimizes the {\em symbol} probability of error, based on $y_{\text{FA}}$, is given by:
	\begin{align}
	\hat{n}_{\text{FA}}(y_{\text{FA}}) \mspace{-3mu} = \mspace{-3mu} \begin{cases} 
																					0, & n_{\text{floor}}(y_{\text{FA}}) \mspace{-3mu} < \mspace{-3mu} 0 \\
																					2^L \mspace{-3mu} - \mspace{-3mu} 1, & n_{\text{floor}}(y_{\text{FA}}) \mspace{-3mu} \ge \mspace{-3mu} 2^L \mspace{-3mu} - \mspace{-3mu} 1 \\
																					n_{\text{floor}}(y_{\text{FA}}) \\ \mspace{6mu} + \mspace{2mu} \hat{S}_{\text{FA}}^{\tilde{\Delta}}(y_{\text{FA}} \mspace{-4mu} - \mspace{-4mu} n_{\text{floor}}(y_{\text{FA}}) \tilde{\Delta}), & \text{otherwise}.
																				\end{cases} \label{eq:nonBin_FA_Det}
	\end{align}
	
\noindent The probability of error of this detector is exactly $\frac{2^L - 1}{2^{L-1}}$ times the probability of error of the detector in \eqref{eq:errProbSymbBySymbFA} with $\Delta$ replaced by $\tilde{\Delta}$, for $\theta_M$ (in \eqref{eq:errProbSymbBySymbFA}) the solution of \eqref{eq:thetaMEquation}, again, with $\Delta$ replaced by $\tilde{\Delta}$.
\end{theorem}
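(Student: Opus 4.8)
The plan is to recognize the rule in \eqref{eq:nonBin_FA_Det} as the maximum-likelihood (ML) detector for the $2^L$-ary problem and to exploit the fact that all $2^L$ conditional densities are equispaced shifts of a single unimodal density. Since the symbols are equiprobable, the detector minimizing the symbol error probability based on $y_{\text{FA}}$ is $\hat{n}_{\text{FA}}(y_{\text{FA}}) = \argmax_{n \in \{0,\dots,2^L-1\}} f_{Y_{\text{FA}}|X}(y_{\text{FA}} \mid x = n\tilde{\Delta})$. Writing $f_0(\cdot) \triangleq f_{Y_{\text{FA}}|X}(\cdot \mid x=0)$, expression \eqref{eq:yFApdf} shows this density depends only on $y_{\text{FA}}-x$, so $f_{Y_{\text{FA}}|X}(y_{\text{FA}} \mid x=n\tilde{\Delta}) = f_0(y_{\text{FA}} - n\tilde{\Delta})$, and $f_0$ is unimodal with mode $\omega_M$. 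Thus the ML rule compares shifts of one unimodal function, with the mode of hypothesis $n$ located at $\omega_M + n\tilde{\Delta}$.

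First I would establish that the ML decision regions are contiguous intervals whose internal boundaries are the pairwise crossover points of adjacent hypotheses. Fix $y_{\text{FA}}$ with $n_{\text{floor}}(y_{\text{FA}}) = n$, i.e. $y_{\text{FA}} \in [\omega_M + n\tilde{\Delta}, \omega_M + (n+1)\tilde{\Delta})$, so $y_{\text{FA}}$ sits between the modes of hypotheses $n$ and $n+1$. For any $n' < n$ the argument $y_{\text{FA}} - n'\tilde{\Delta} > y_{\text{FA}} - n\tilde{\Delta} \ge \omega_M$ lies on the decreasing branch of $f_0$, whence $f_0(y_{\text{FA}} - n'\tilde{\Delta}) < f_0(y_{\text{FA}} - n\tilde{\Delta})$; symmetrically, for $n' > n+1$ both arguments lie on the increasing branch and $f_0(y_{\text{FA}} - n'\tilde{\Delta}) < f_0(y_{\text{FA}} - (n+1)\tilde{\Delta})$. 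Hence the maximizer in this cell is either $n$ or $n+1$, and the ML choice between them is determined solely by $f_0(y_{\text{FA}} - n\tilde{\Delta}) \gtrless f_0(y_{\text{FA}} - (n+1)\tilde{\Delta})$. Setting $z = y_{\text{FA}} - n\tilde{\Delta}$, this is exactly the binary test of Thm.~\ref{thm:FA_detector} with spacing $\tilde{\Delta}$, whose unique threshold $\theta_M$ (the solution of \eqref{eq:thetaMEquation} with $\Delta$ replaced by $\tilde{\Delta}$) satisfies $\omega_M < \theta_M < \omega_M + \tilde{\Delta}$, because each hypothesis strictly dominates at its own mode; consequently the boundary $n\tilde{\Delta}+\theta_M$ lies strictly inside cell $n$. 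This yields $\hat{n}_{\text{FA}}(y_{\text{FA}}) = n + \hat{S}_{\text{FA}}^{\tilde{\Delta}}(y_{\text{FA}} - n\tilde{\Delta})$, the interior case of \eqref{eq:nonBin_FA_Det}. The same monotonicity argument gives the edges: if $n_{\text{floor}} < 0$ then $y_{\text{FA}} < \omega_M$ lies left of every mode and hypothesis $0$ wins, while if $n_{\text{floor}} \ge 2^L-1$ then $y_{\text{FA}}$ lies right of every mode and hypothesis $2^L-1$ wins.

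Next I would compute the symbol error probability by integrating over these regions using the conditional CDF $\Psi(c,M,\cdot)$ from \eqref{eq:yFAcdf}. For an interior symbol $n$ the decision region is $[(n-1)\tilde{\Delta}+\theta_M,\ n\tilde{\Delta}+\theta_M)$; conditioning on $x=n\tilde{\Delta}$ and substituting $Y_{\text{FA}} = n\tilde{\Delta}+Z_{\text{FA}}$ (with $Z_{\text{FA}}$ the offset-zero FA noise) shows its error probability equals the two-sided sum $\Pr\{Z_{\text{FA}} \ge \theta_M\} + \Pr\{Z_{\text{FA}} < \theta_M - \tilde{\Delta}\}$, which is exactly $2P_{\varepsilon,\text{FA}}^{\tilde{\Delta}}$, twice the binary error of \eqref{eq:errProbSymbBySymbFA} (with $\Delta\to\tilde{\Delta}$). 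The two edge symbols each contribute only one of these two terms, so together they also sum to $2P_{\varepsilon,\text{FA}}^{\tilde{\Delta}}$. Averaging over the $2^L$ equiprobable symbols gives $\frac{1}{2^L}\big[(2^L-2)\cdot 2P_{\varepsilon,\text{FA}}^{\tilde{\Delta}} + 2P_{\varepsilon,\text{FA}}^{\tilde{\Delta}}\big] = \frac{2^L-1}{2^{L-1}}P_{\varepsilon,\text{FA}}^{\tilde{\Delta}}$, the claimed factor.

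The main obstacle is the reduction of the $2^L$-ary ML rule to pairwise adjacent tests: everything hinges on the unimodality of $f_0$ together with the equal spacing $\tilde{\Delta}$, which jointly guarantee both that non-adjacent hypotheses can never be the likelihood maximizer and that the binary crossover $\theta_M$ falls strictly inside the cell picked out by $n_{\text{floor}}$. Once this interval structure is pinned down, the boundary identification and the error bookkeeping are routine. A minor point needing care is the right-sided support of $f_0$ (it vanishes for arguments $\le 0$), which must be checked not to disturb the monotonicity comparisons near the left edge; this is handled by the causality observation already invoked in the proofs of Prop.~\ref{prop:decRuleSymbBySymb} and Thm.~\ref{thm:FA_detector}.
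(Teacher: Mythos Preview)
Your proposal is correct and follows essentially the same approach as the paper: both reduce the $2^L$-ary ML rule to a comparison between the two adjacent hypotheses via unimodality of $f_0$, and both compute the symbol error by summing the two-sided contributions of interior points and the one-sided contributions of the two edge points. Your treatment is in fact slightly more careful than the paper's in two places: you explicitly argue that the binary crossover $\theta_M$ lands strictly inside the cell $[\omega_M,\omega_M+\tilde{\Delta})$ (so that $n_{\text{floor}}$ indeed isolates the correct adjacent pair), and you flag the right-sided support of $f_0$ as a point to check near the left edge.
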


\begin{proof}
	We first discuss the optimality of the detector in \eqref{eq:nonBin_FA_Det}.
	The extreme cases in \eqref{eq:nonBin_FA_Det} are straightforward, thus, we discuss the ``middle'' points.
	The optimal detector based on $y_{\text{FA}}$ is the ML detector: 
	\begin{align}
		\hat{n}_{\text{FA}}(y_{\text{FA}})  = \argmax_{n \in \{0,1\dots,2^{L}-1\}}f_{Y_{\text{FA}}|X}(y_{\text{FA}}|x = n \tilde{\Delta}). \label{eq:MLfaNonBin}
	\end{align}
	
	\noindent 
	Recall that the density of the FA is unimodal. It follows that $n_{\text{floor}}(y_{\text{FA}}) \tilde{\Delta} + \omega_M \le y_{\text{FA}} \le (n_{\text{floor}}(y_{\text{FA}}) + 1) \tilde{\Delta}$. Since the density is unimodal, we have:
	\begin{align*}
		& f_{Y_{\text{FA}}|X}(y_{\text{FA}}|x \mspace{-3mu} = \mspace{-3mu} n^{\ast} \tilde{\Delta}) \nonumber \\
		& \quad \le \mspace{-3mu} f_{Y_{\text{FA}}|X}(y_{\text{FA}}|x \mspace{-3mu} = \mspace{-3mu} n_{\text{floor}}(y_{\text{FA}}) \tilde{\Delta}), \mspace{7mu} \forall n^{\ast} \mspace{-3mu} \le \mspace{-3mu} n_{\text{floor}}(y_{\text{FA}}), \\
		& f_{Y_{\text{FA}}|X}(y_{\text{FA}}|x \mspace{-3mu} = \mspace{-3mu} n^{\ast} \tilde{\Delta}) \nonumber \\
		& \quad \le \mspace{-3mu} f_{Y_{\text{FA}}|X}(y_{\text{FA}}|x \mspace{-3mu} = \mspace{-3mu} (n_{\text{floor}}(y_{\text{FA}}) \mspace{-3mu} + \mspace{-3mu} 1) \tilde{\Delta}), \mspace{7mu} \forall n^{\ast} \mspace{-3mu} \ge \mspace{-3mu} n_{\text{floor}}(y_{\text{FA}}) \mspace{-3mu} + \mspace{-3mu} 1. 
	\end{align*}
	
	\noindent Thus, in the maximization \eqref{eq:MLfaNonBin} one needs to consider only $n_{\text{floor}}(y_{\text{FA}})$ and $n_{\text{floor}}(y_{\text{FA}}) + 1$. The problem is reduced to a binary detection setup with spacing of $\tilde{\Delta}$. The optimal detector for this problem is given in Thm. \ref{thm:FA_detector}.
	
	For the probability of error we note that the first (in time) and last constellation points exactly correspond to the binary case discussed in Thm. \ref{thm:FA_detector}. The other $2^{L-1}$ constellation points have two adjacent neighbors (a preceding constellation point and a succeeding one). Letting $\theta_M(\tilde{\Delta})$ denote the decision threshold, the probability of error for the ``middle'' constellation points is given by $1 - (\Psi(c,M,\theta_M(\tilde{\Delta}) - \Psi(c,M,\theta_M(\tilde{\Delta}) - \tilde{\Delta})$, where $\Psi(\cdot)$ is defined in \eqref{eq:yFAcdf}. Finally we note that there are $2^{L-1}$ ``middle'' constellation points, thus the over all probability of error is $\frac{2^L - 1}{2^{L-1}}$ times the one stated in \eqref{eq:errProbSymbBySymbFA} (with the proper $\tilde{\Delta}$ and $\theta_M(\tilde{\Delta})$). This concludes the proof.	
\end{proof}

Next, we consider the scaling order of $L$, as a function of $M$, which ensures a vanishing symbol error probability. We note that a linear increase in $L$ results in an exponential decrease in $\tilde{\Delta}$, thus, $L$ should scale at most logarithmically with $M$. 
The next theorem argues that with logarithmic scaling the probability of error does not vanish, thus, a slower scaling is required.
\begin{theorem} \label{thm:nonBinScaling}
	The symbol probability of error of the detector in \eqref{eq:nonBin_FA_Det} vanishes when $L,M \to \infty$, if $L$ scales at most as $\log \log M^{1 -\epsilon}$, for some $\epsilon > 0$. 
\end{theorem}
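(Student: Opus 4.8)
The plan is to combine Theorem~\ref{thm:nonBin_FA_Det}, which reduces the non-binary symbol error probability to a binary first-arrival error probability, with a \emph{sub-optimal threshold} bound that avoids analysing the optimal threshold $\theta_M$ altogether. By Theorem~\ref{thm:nonBin_FA_Det} the symbol error probability equals $\frac{2^L-1}{2^{L-1}}$ times the binary first-arrival error \eqref{eq:errProbSymbBySymbFA} evaluated with $\Delta$ replaced by the sub-interval length $\tilde{\Delta} = \Delta/(2^L-1)$. Since $\frac{2^L-1}{2^{L-1}} < 2$ for every $L\ge 1$, it suffices to show that the binary first-arrival error probability with spacing $\tilde{\Delta}$ tends to zero as $M\to\infty$ under the stated scaling.

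Next I would use the optimality of the first-arrival detector of Theorem~\ref{thm:FA_detector}: since its threshold $\theta_M$ minimizes the error among all threshold rules based on $y_{\text{FA}}$, for \emph{any} $\theta>\tilde{\Delta}$ the error probability is upper bounded by $P(\theta) = 0.5\big[(1-\erfc(\sqrt{c/(2\theta)}))^M + 1 - (1-\erfc(\sqrt{c/(2(\theta-\tilde{\Delta}))}))^M\big]$. I would then pick the explicit sub-optimal threshold $\theta = \tilde{\Delta} + \gamma$ with $\gamma = \frac{c}{4\log M}$, which splits the analysis into the ``$X=0$'' term and the ``$X=\tilde{\Delta}$'' term.

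For the first (``$X=0$'') term, using $\theta\ge\tilde{\Delta}$ and $(1-p)^M\le e^{-Mp}$ gives the bound $e^{-M\erfc(\sqrt{c/(2\tilde{\Delta})})}$, where $c/(2\tilde{\Delta}) = c(2^L-1)/(2\Delta)$. The scaling hypothesis ensures $c(2^L-1)/(2\Delta)\le(1-\epsilon)\log M$, so together with the standard tail lower bound $\erfc(x)\gtrsim e^{-x^2}/x$ one gets $M\,\erfc(\sqrt{c/(2\tilde{\Delta})})\to\infty$ (the $\frac12\log(\cdot)$ correction is dominated by $\epsilon\log M$), and the first term vanishes. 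For the second (``$X=\tilde{\Delta}$'') term, $1-(1-p)^M\le Mp$ together with $\erfc(x)\le e^{-x^2}$ and the choice of $\gamma$ gives $M\,\erfc(\sqrt{2\log M})\le M\cdot M^{-2}=M^{-1}\to 0$. Hence $P(\theta)\to 0$, and the symbol error probability vanishes.

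The main obstacle is the tension between the two error events and its translation into the scaling of $L$ in $M$. Pushing the ``$X=\tilde{\Delta}$'' error to zero forces the threshold close to $\tilde{\Delta}$, whereas killing the ``$X=0$'' error requires $\tilde{\Delta}$ itself to stay bounded below on the order of $1/\log M$; the condition $c(2^L-1)/(2\Delta)\le(1-\epsilon)\log M$ is exactly what keeps $\tilde{\Delta}\gtrsim 1/\log M$ so that both events can be controlled simultaneously. Rearranging this condition as $2^L\lesssim \log M^{1-\epsilon}$ (with the constant $2\Delta/c$ and the logarithm base absorbed) yields $L\lesssim\log\log M^{1-\epsilon}$, and the slack $\epsilon>0$ is precisely what guarantees $\log M - c(2^L-1)/(2\Delta)\to\infty$. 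The remaining erfc tail estimates are routine.
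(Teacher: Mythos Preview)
Your proposal is correct and takes a genuinely different route from the paper's own proof. The paper works directly with the \emph{optimal} threshold $\theta_M=\tilde{\Delta}+\delta_M$, imports the scaling $\delta_M\sim d_1/M$ derived earlier in the error-exponent analysis (Appendix~\ref{annex:thm_ErrExp_FA_proof}), and then uses the approximations $\erfc(x)\approx\tfrac{1}{6}e^{-x^2}$ and $(1-p)^M\approx e^{-Mp}$ to reduce the two error terms to a double-exponential form from which the $\log\log M^{1-\epsilon}$ scaling is read off. You instead bypass the analysis of $\theta_M$ entirely: since the FA detector in Theorem~\ref{thm:FA_detector} is the optimal threshold rule on $y_{\text{FA}}$, any explicit threshold gives an upper bound, and your choice $\theta=\tilde{\Delta}+c/(4\log M)$ decouples the two terms cleanly via the elementary inequalities $(1-p)^M\le e^{-Mp}$, $1-(1-p)^M\le Mp$, and standard $\erfc$ tail bounds. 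What your approach buys is self-containment and rigor---no reliance on the $\delta_M\sim d_1/M$ asymptotics and no ``$\approx$'' steps---at the cost of not identifying the optimal threshold's behavior; what the paper's approach buys is a direct link to the error-exponent computation and an indication that $\log\log M^{1-\epsilon}$ is essentially the right order rather than merely sufficient. Both treatments share the same looseness about constants (the factor $c/(2\Delta)$ and the logarithm base) inherited from the ``scales at most as'' phrasing of the theorem.
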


\begin{proof}
	The proof is provided in Appendix \ref{annex:thm_nonBinScaling_proof}.
\end{proof}

\noindent Thus, to reliably send a large number of bits using the above transmission scheme, a very large $M$ is required. 

Next, we present our numerical results.

\section{Numerical Results} \label{sec:numRes}
	
		\renewcommand{\figWidth}{0.9}
	\begin{figure}[t!]
    \centering
    \includegraphics[width=\figWidth\columnwidth]{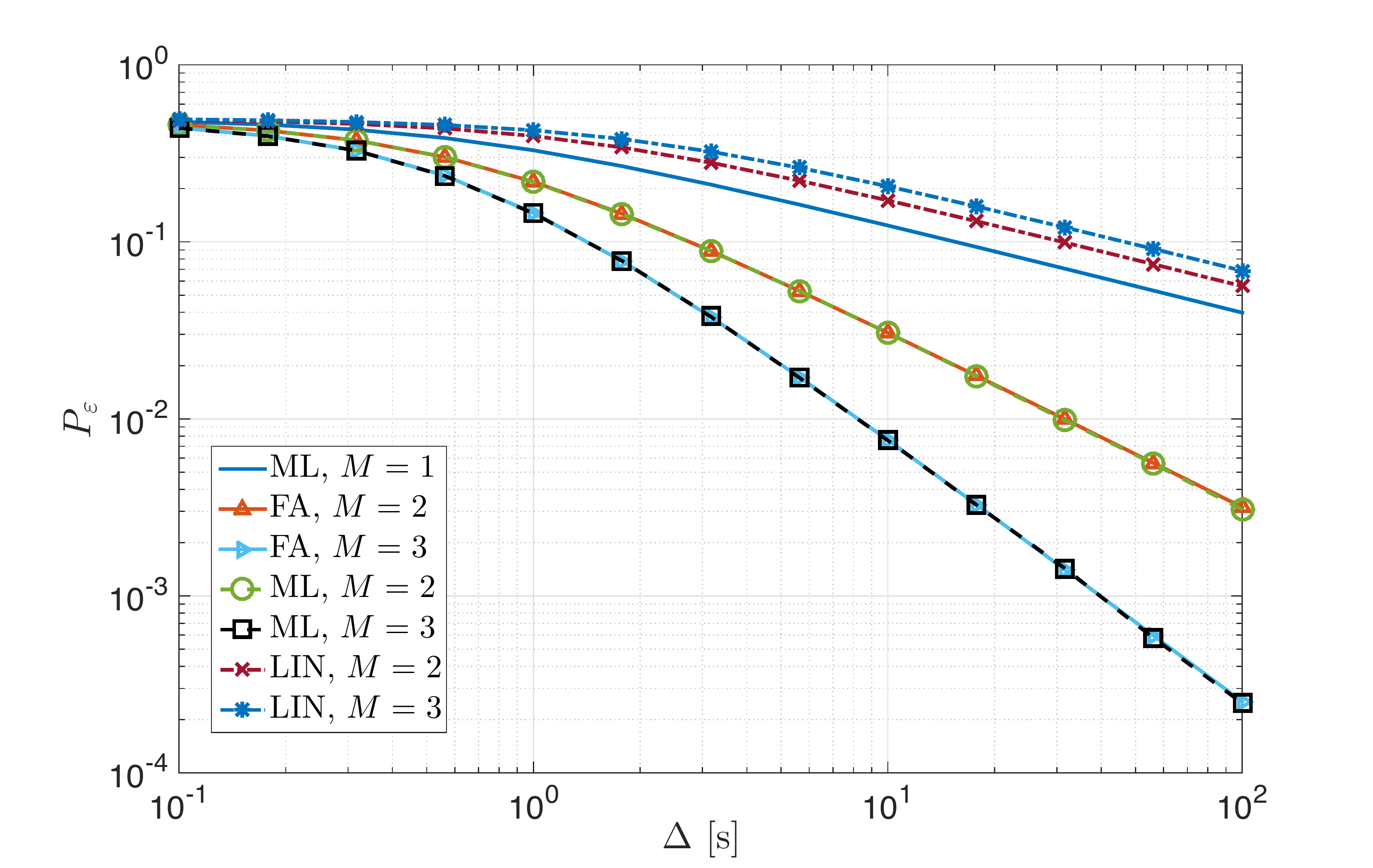}
		\captionsetup{font=footnotesize}
    \caption{$P_{\varepsilon}$ vs. $\Delta$, for $c=1 [s]$ and $M=1,2,3$.}
    \label{fig:PeVsDelta_c1DiffMs}
		\vspace{-0.3cm}
\end{figure}
	
	In this section we numerically evaluate the performance of the different detectors as a function of the channel and modulation.  		
		Fig. \ref{fig:PeVsDelta_c1DiffMs} depicts the probability of error versus different values of $\Delta$, for $M=1,2,3$, for the ML, FA, and linear detectors. Throughout this section $10^6$ trials were carried out for each $\Delta$ point.  
			When $M=1$ all the detectors have identical performance. 
			For larger values of $M$, the probability of error of the ML detector was evaluated numerically, while the probability of error of the FA detector was calculated using \eqref{eq:errProbSymbBySymbFA}. For the linear detector we assumed $w_m = \frac{1}{M}$ which leads to $c_{\text{LIN}} = Mc$. It can be observed that, as expected, the probability of error decreases with $\Delta$. For the ML and FA detectors, the error probability also decreases with $M$, but for the linear detector, the error probability increases with M. Moreover, as stated in Section \ref{subsec:PerfCompare}, Fig. \ref{fig:PeVsDelta_c1DiffMs} shows that the performance of the ML and FA detectors is practically indistinguishable for small values of $M$.  
		
Fig. \ref{fig:PeVsM_c2DiffDeltas} depicts the probability of error versus the number of released particles $M$, for the ML and FA detectors, for $\Delta = 0.2, 0.5$, and $c=2$. Here, $10^6$ trials were carried out for each $M$ point. It can be observed that for small values of $M$, as indicated by Fig. \ref{fig:PeVsDelta_c1DiffMs}, the FA and ML are indistinguishable. 
On the other hand, when $M$ increases, e.g., $M > 100$, the superiority of the ML detector is revealed. This supports the results of Section \ref{subsec:LargeM}.
\begin{figure}[t]
    \centering
    \includegraphics[width=\figWidth\columnwidth]{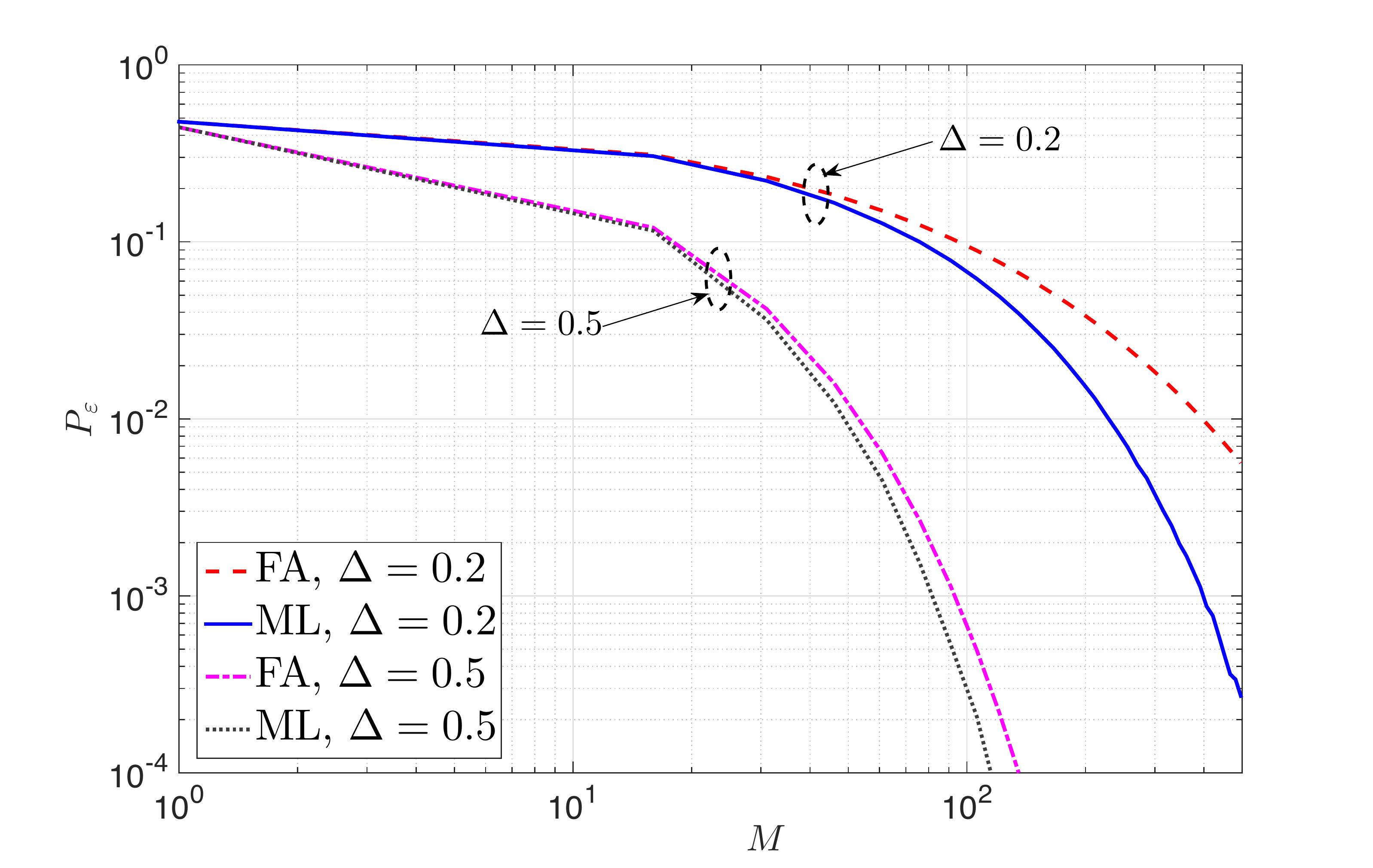}
		\captionsetup{font=footnotesize}
    \caption{$P_{\varepsilon}$ vs. $M$, for $c=2 [s]$ and $\Delta=0.2,0.5 [s]$.}
    \label{fig:PeVsM_c2DiffDeltas}
    \vspace{-0.3cm}
\end{figure}	

Note that Fig. \ref{fig:PeVsM_c2DiffDeltas} also indicates that for large enough $M$, the probability of error decays exponentially with $M$. This implies that if $c$ changes, e.g., the distance between the transmitter and receiver increases, one can achieve the same $P_{\varepsilon}$ by increasing $M$. This is demonstrated in Fig. \ref{fig:MVsC_Pe0p01}, where $P_{\varepsilon}$ is fixed to $0.01$, and the required $M$ is presented as a function of $c$, for different values of $\Delta$. \changeYony{Note that for an uncoded $P_{\varepsilon}$ of $0.01$, coding can be used to drive down the BER to a desired level.}

As discussed in Section \ref{sec:beyondBinary}, one can tradeoff the probability of error with the data rate, i.e. the number of bits conveyed in each transmitted symbol, $L$. 
More precisely, for a given $\Delta$ and $L$, by using $M$ large enough, one can achieve the desired probability of error. This is demonstrated in Fig. \ref{fig:PeVsDelta_c1DiffLs}, which shows that a {\em symbol} probability of error $P_{\varepsilon,s} = 0.01$ can be achieved when $\Delta$ is about 3 seconds by using different $(M,L)$ pairs. This implies that by using a large number of particles, the transmitter can send short messages using a single-shot transmission with relatively small values of $\Delta$. It can further be observed that the required $M$ must scale significantly faster than exponentially with $L$. Yet, to clearly observe the double exponential scaling of $M$ with $L$, much larger values of $(M,L)$ should be considered. \changeYony{Unfortunately, this leads to numerical instabilities in the numerical computations.}
\begin{figure}[t]
    \centering
    \includegraphics[width=\figWidth\columnwidth]{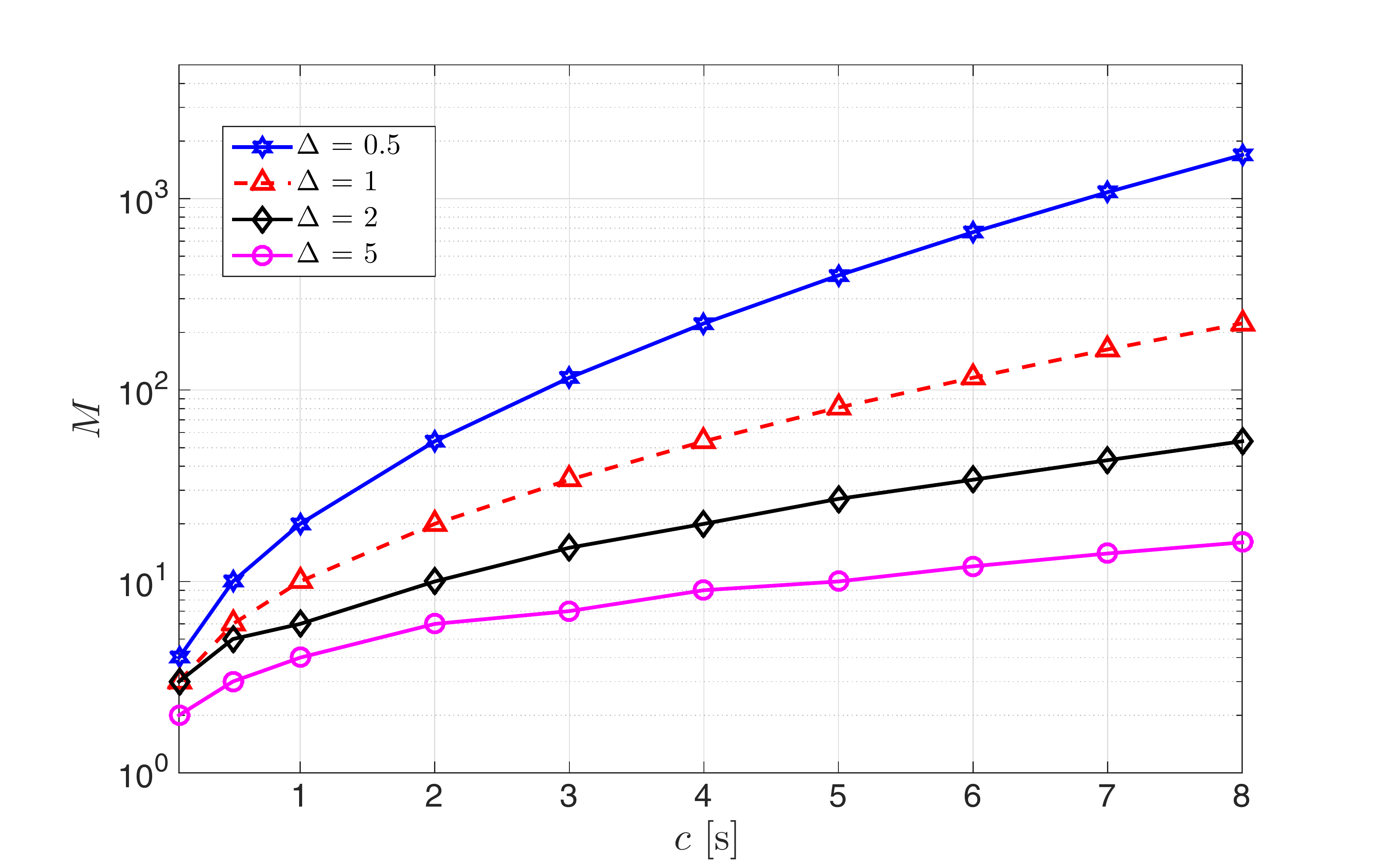}
		\captionsetup{font=footnotesize}
    \caption{The number of particles $M$ required to achieve $P_{\varepsilon}=0.01$, as a function of $c [s]$, for the FA detector.}
    \label{fig:MVsC_Pe0p01}
    \vspace{-0.3cm}
\end{figure}
\begin{figure}[t]
    \centering
    \includegraphics[width=\figWidth\columnwidth]{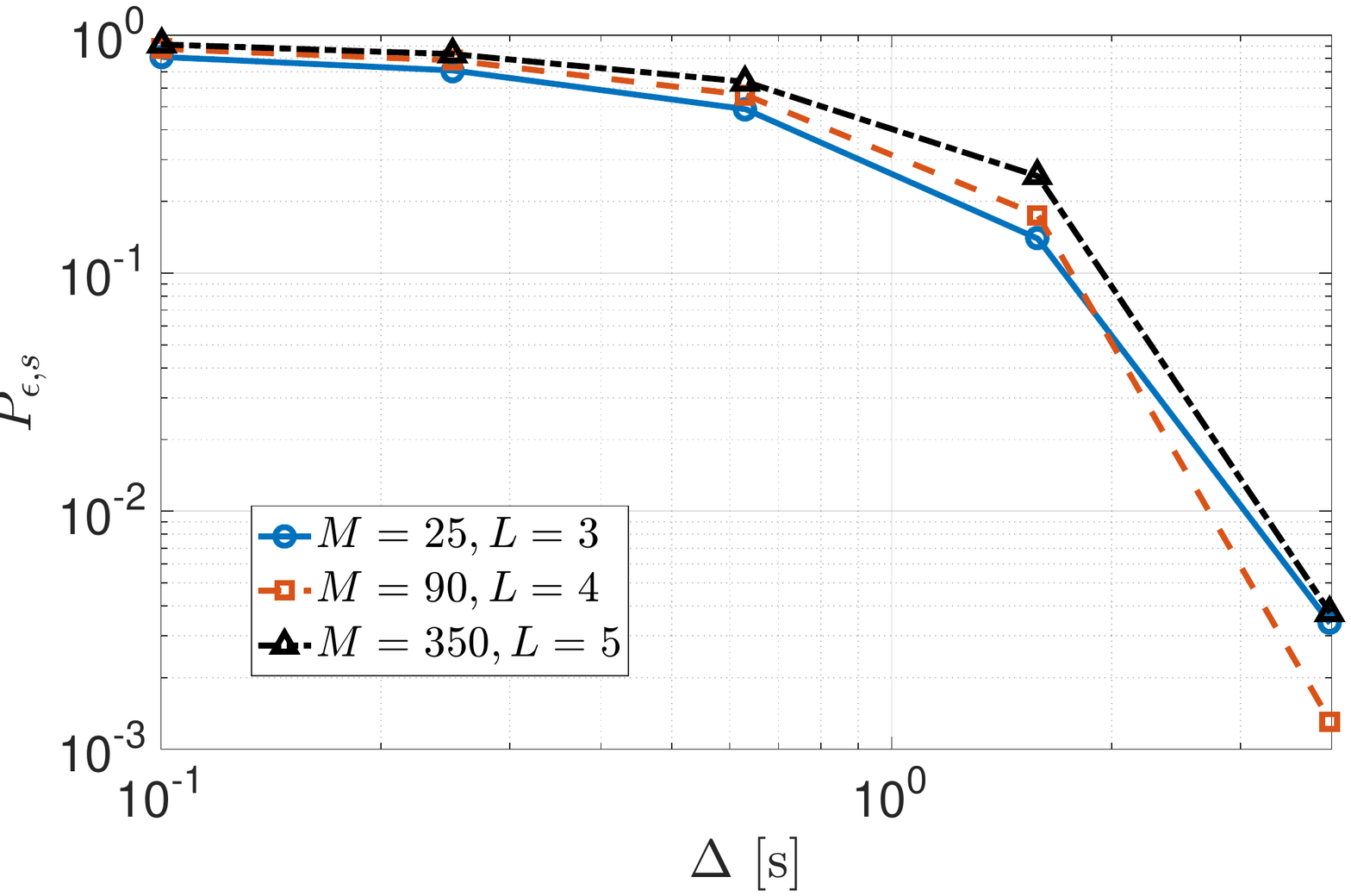}
		\captionsetup{font=footnotesize}
    \caption{$P_{\varepsilon,s}$ vs. $\Delta$, for $c \mspace{-3mu} = \mspace{-3mu} 1 [s]$ and the $(M,L)$ pairs: $(25,3)$, $(90,4)$, $(350,5)$.}
    \label{fig:PeVsDelta_c1DiffLs}
    \vspace{-0.2cm}
\end{figure}

\tyony{Finally, we consider the case of diffusion with a drift, i.e, the AIGN channel. The ML detector for the case of $M>1$ was presented in \cite[eq. (45)]{sri12}, while the FA and linear detectors are discussed in Section \ref{subsec:FA}. Fig. \ref{fig:IG_PeVsVelocity_DifMs} depicts the probability of error versus different values of drift velocity $v$, for the AIGN channel, and for $M=1,2,4$. Here $\Delta = 1[s], D=0.5 [\mu m^2 /s]$ and $d = 1 [\mu m]$, which implies that $\lambda = 1 [s]$. This setting is equivalent to the one simulated in \cite[Fig. 9]{sri12}. $10^6$ trials were carried out for each $v$ point. It can be observed that, similar to the case of diffusion without a drift, the FA detector significantly outperforms the linear detector, and is almost indistinguishable from the ML detector. Only when $M$ is very large the performance gap between FA and ML become apparent. 
\begin{figure}[t]
    \centering
    \includegraphics[width=\figWidth\columnwidth]{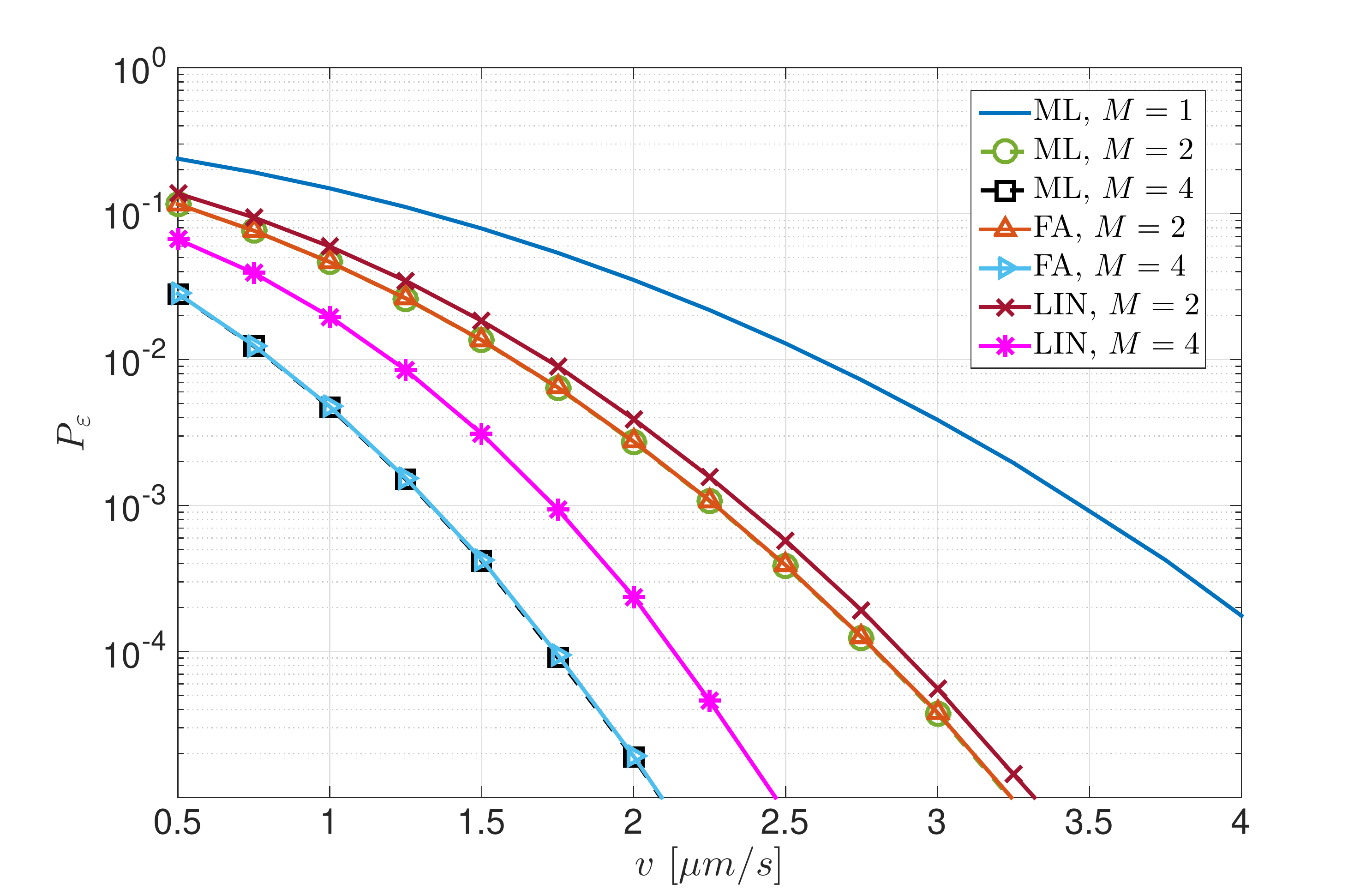}
		\captionsetup{font=small}
    \caption{AIGN channel: $P_{\varepsilon}$ vs. $v$, for $D=2 [\mu m^2 /s]$ and $d = 1 [\mu m]$.}
    \label{fig:IG_PeVsVelocity_DifMs}
    \vspace{-0.3cm}
\end{figure}	
}


\section{Conclusions} \label{sec:conc}

We have studied communication over DBMT channels assuming that multiple information particles are simultaneously released at each transmission. 
We first derived the optimal ML detector, which has high complexity. We next considered the linear detection framework, which was shown to be effective in Gaussian channels or in MT channels in the presence of drift. 
However, we showed that when the noise is stable with characteristic exponent smaller then unity, as is the case in MT channels without drift, then linear processing increases the noise dispersion, which results in a higher probability of error than for the single particle release modulation. 

We then proposed the FA detector and showed that for low to medium values of $M$ it achieves a probability of error very close to that of the ML detector with a complexity similar to that of the linear detector. 
On the other hand, since the first arrival is not a sufficient statistic for the detection problem, it is not expected that the FA and the ML detectors will be equivalent for all values of $M$. To rigorously prove this statement we derived the error exponent of both detectors and showed that, indeed, for large values of M the performance of ML is superior.
\tyony{While the focus of this paper is on DBMT channels without drift, we showed that the above results also extend to the case of diffusion with drift (the AIGN channel). More precisely, for small values of $M$ (up to the order of tens), the FA detector outperforms the linear detector, and closely approaches the performance of the ML detector. Thus, the FA detector provides a very good approximation for the ML detector in DBMT channels.}

\changeYony{Our derivations indicate that the FA detector has a nice property that the conditional densities concentrate towards the particles' release time}, see e.g., Figs. \ref{fig:CondDistributions} and \ref{fig:CondDistributionsIG}. This implies that by using $M$ large enough, one can use large constellations, thus, conveying several bits in each transmission. This property is very attractive for molecular nano-scale sensors that are required to send a limited number of bits and then remain quiet for a long period of time.

\appendices
\numberwithin{equation}{section}


\section{Proof for the Uniqueness of $\theta$ in \eqref{eq:thetaEquation}} \label{annex:Uniqueness_proof}

First we note that the mode of a standard L\'evy-distributed RV is $\frac{c}{3}$, and therefore, the decision threshold must lie in the interval $[\Delta, \Delta + \frac{c}{3}]$. The uniqueness of the solution stems from the fact that the PDF of the L\'evy distribution is unimodal \cite[Ch. 2.7]{zolotarev-book}, and from the fact that the PDFs in the two hypotheses are shifted version of the L\'evy distribution.   
More precisely, note that for $y_1 \to \Delta$, the LHS of \eqref{eq:thetaEquation} tends to zero, while for $y_1 \to \Delta + \frac{c}{3}$ the LHS of \eqref{eq:thetaEquation} is larger than $\frac{c \Delta}{3}$. We now show that the derivative of the LHS of \eqref{eq:thetaEquation}, which is given by $(2y_1 - \Delta) \log \left( \frac{y_1}{y_1 - \Delta} \right) - \Delta$, is positive. This implies that \eqref{eq:thetaEquation} has a unique solution. We write:
\begin{align*}
	& (2y_1 - \Delta) \log \left( \frac{y_1}{y_1 - \Delta} \right) - \Delta \nonumber \\
	& \mspace{80mu} \stackrel{(a)}{=} \Delta \left( \log \left( w \right) \left(1 - \frac{2}{1 - w} \right) - 1 \right) \\
	& \mspace{80mu} \stackrel{(b)}{\ge} \Delta \left( \left(1 - \frac{1}{w} \right) \left(1 - \frac{2}{1 - w} \right) - 1 \right) \\
	& \mspace{80mu} = \Delta \left( \frac{W+1}{W} - 1 \right) \\
	& \mspace{80mu} \ge 0.
\end{align*}

\noindent Here, (a) follows by setting $w = 1 - \frac{\Delta}{y_1}$. Note that since $y_1 \ge \Delta$, then $w \in [0,1]$. For step (b) we use the inequality $1 - \frac{1}{w} \le \log (w)$. Thus, as the derivative is positive, we conclude that \eqref{eq:thetaEquation} has a unique solution in the desired range.

\section{Proofs for Theorem \thmref{thm:PmmUB}} \label{annex:thm_PmmUB_proof}

First we prove that the equation $g(x) = 0$ has a unique solution. Then we derive the properties of $g(x)$, and finally, we derive the upper bound on the mismatch probability.

Let $\alpha = \frac{c \Delta}{3}$. Thus, we can write $g(x)$ as $g(x) = \log \left( \frac{x - \Delta}{x} \right) + \frac{\alpha}{x (x-\Delta)}, x > \Delta$. First, we show that $g(x)$ has a single extreme point which is larger than $\Delta$. Writing the derivative of $g(x)$ we have:
\begin{align*}
	\frac{\partial g(x)}{\partial x} = \frac{\alpha (\Delta - 2x) + \Delta x (x - \Delta)}{x^2(x-\Delta)^2}.
\end{align*}

\noindent Thus, the extreme points of $g(x)$ are the roots of the polynomial $x^2 - \frac{2 \alpha + \Delta^2}{\Delta} x + \alpha$. Plugging $\alpha = \frac{c \Delta}{3}$ and using the expressions for roots of a quadratic equation we obtain that the extreme points are given by:
\begin{align*}
	x_1 &  = \frac{c}{3} + \frac{\Delta}{2} \left( 1 + \sqrt{1 + \frac{4c^2}{9\Delta^2}} \right), \nonumber \\
	x_2 & = \frac{c}{3} + \frac{\Delta}{2} \left( 1 - \sqrt{1 + \frac{4c^2}{9\Delta^2}} \right).
\end{align*} 

\noindent Now, it can be observed that $x_1 > \frac{c}{3} + \Delta > \Delta $ which proves the existence of an extreme point larger than $\Delta$. For $x_2$ we write:
\begin{align*}
	x_2 - \Delta & = \frac{c}{3} + \frac{\Delta}{2} \left( 1 - \sqrt{1 + \frac{4c^2}{9\Delta^2}} \right) - \Delta \nonumber \\
	& = \frac{c}{3} - \frac{\Delta}{2} \left( 1 + \sqrt{1 + \frac{4c^2}{9\Delta^2}} \right) \nonumber \\
	& < 0.
\end{align*}

\noindent Hence, in the range $x>\Delta$, the function $g(x)$ has a single extreme point. Next, we note that $\lim_{x \to \Delta} g(x) = \infty$, while $\lim_{x \to \infty} g(x) = 0$. Therefore, $x_1$ is a minimum point. Thus, the equation $g(x)=0$ has a single solution in the range $x>\Delta$. 


Next, we upper bound the mismatch probability. Let ``mismatch" denote the event of $\hat{S}_{\text{ML}}(\yvec) \neq \hat{S}_{\text{FA}}(\yvec)$. We write:
\begin{align}
	\Pr \{ \text{mismatch} \} & = 0.5 \Big( \Pr \{ \text{mismatch} | x = 0 \} \nonumber \\
	& \mspace{60mu} + \Pr \{ \text{mismatch} | x = \Delta \} \Big).
	\label{eq:Prob_mismatch}
\end{align}

\subsection{Upper Bounding $\Pr \{ \text{mismatch} | x = 0 \}$}
We begin with upper bounding $\Pr \{ \text{mismatch} | x = 0 \}$. Note that if $y_{\text{FA}} \le \Delta$ then $\hat{S}_{\text{ML}}(\yvec) = \hat{S}_{\text{FA}}(\yvec) = 0$, and therefore we analyze only the case of $\Delta < y_{\text{FA}}$.
Recall that for $\Delta < y_{\text{FA}} < \theta_M$ the FA detector decides $\hat{S}_{\text{FA}}(\yvec) = 0$. Hence, a mismatch event occurs when the ML detector declares $\hat{S}_{\text{ML}}(\yvec) = 1$, which occurs if (see \eqref{eq:decisionRuleMultiple}):  
\begin{align}
		\sum_{m=1}^{M}{\log \left( \frac{y_m-\Delta}{y_m} \right) + \frac{c \Delta}{3} \frac{1}{y_m(y_m - \Delta)}} < 0.
		\label{eq:ML_1_decision}
\end{align}

\noindent The LHS of \eqref{eq:ML_1_decision} can be written as:
\begin{align*}
		& \sum_{m=1}^{M}{\log \left( \frac{y_m-\Delta}{y_m} \right) + \frac{c \Delta}{3} \frac{1}{y_m(y_m - \Delta)}} \nonumber \\
		& \qquad= \log \left( \frac{y_\text{FA}-\Delta}{y_\text{FA}} \right) + \frac{c \Delta}{3} \frac{1}{y_\text{FA}(y_\text{FA} - \Delta)} \nonumber \\
		& \qquad \qquad + \sum_{m=2}^{M}{\log \left( \frac{y_m-\Delta}{y_m} \right) + \frac{c \Delta}{3} \frac{1}{y_m(y_m - \Delta)}} .
\end{align*}

\noindent Therefore, \eqref{eq:ML_1_decision} can be written as:
\begin{align}
	& \log \left( \frac{y_\text{FA}-\Delta}{y_\text{FA}} \right)  + \frac{c \Delta}{3} \frac{1}{y_\text{FA}(y_\text{FA} - \Delta)} \nonumber \\
	& \qquad < \sum_{m=2}^{M}{\log \left( \frac{y_m}{y_m-\Delta} \right) - \frac{c \Delta}{3} \frac{1}{y_m(y_m - \Delta)}}.
\end{align}

\noindent Let $\phi(y) = \log \left( \frac{y-\Delta}{y} \right)  + \frac{c \Delta}{3} \frac{1}{y(y - \Delta)}$.
Next, we define the set:
\begin{align*}
	\mathcal{B}_1(y) & \triangleq \Bigg\{(y_2,y_3,\dots,y_M): \nonumber \\
	& \qquad \phi(y) \mspace{-3mu} < \mspace{-3mu} \sum_{m=2}^{M}{\log \left( \frac{y_m}{y_m \mspace{-3mu} - \mspace{-3mu} \Delta} \right) \mspace{-3mu} - \mspace{-3mu} \frac{c \Delta}{3} \frac{1}{y_m(y_m \mspace{-3mu} - \mspace{-3mu} \Delta)}}\Bigg\}.
\end{align*}

\noindent Thus, $\Pr \{ \text{mismatch} | x \mspace{-3mu} = \mspace{-3mu} 0 \}$, when $\Delta \mspace{-3mu} < \mspace{-3mu} y_{\text{FA}} \mspace{-3mu} < \mspace{-3mu} \theta_M$, is given by:
\begin{align}
	& \Pr \{ \text{mismatch} | x = 0, \Delta < y_{\text{FA}} < \theta_M \} \nonumber \\
	& \qquad = \int_{\Delta}^{\theta_M} f_{Y_{\text{FA}|X}}(y|x=0) \nonumber \\
	& \qquad \qquad \times \int_{\mathcal{B}_1(y)}{f_{\{Y_j\}_{j=2}^M|X}(\{y_j\}_{j=2}^M|x=0)} \{dy_j\}_{j=2}^M dy \nonumber \\ 
	& \qquad \stackrel{(a)}{\le} \int_{\Delta}^{\theta_M}{f_{Y_{\text{FA}|X}}(y|x=0) dy} \nonumber \\
	& \qquad = \Psi(c,M,\theta_M) - \Psi(c,M,\Delta), \label{eq:probMM_firstTermBound}
\end{align}
 
\noindent where (a) follows from the fact that in the second integrand is a joint PDF, and therefore it is upper bounded by 1. 
\noindent Following similar arguments, for $\Pr \{ \text{mismatch} | x = \Delta \}$, when $\Delta < y_{\text{FA}} < \theta_M$, we obtain:
\begin{align}
	& \Pr \{ \text{mismatch} | x \mspace{-2mu} = \mspace{-2mu} \Delta, \Delta < y_{\text{FA}} \mspace{-2mu} < \mspace{-2mu} \theta_M \} \nonumber \\
	& \qquad \qquad \le \Psi(c,M,\theta_M-\Delta) - \Psi(c,M,0) . \label{eq:probMM_thirdTermBound}
\end{align}

\subsection{Upper Bounding $\Pr \{ \text{mismatch} | x = 0\}$ for $\theta_M < y_{\text{FA}} $}

First, we recall that when $\theta_M < y_{\text{FA}}$ then $\hat{S}_{\text{FA}}(y_{\text{FA}})=1$. Hence, a mismatch event takes place if the ML detector declares $\hat{S}_{\text{ML}}(\yvec) = 0$, which occurs if (see \eqref{eq:decisionRuleMultiple}):  
\begin{align}
		\sum_{m=1}^{M}{\log \left( \frac{y_m-\Delta}{y_m} \right) + \frac{c \Delta}{3} \frac{1}{y_m(y_m - \Delta)}} > 0.
		\label{eq:ML_0_decision}
\end{align}

\noindent We showed that if $y_{\text{FA}} > x^{\ast}$ then $g(y_{m}) < 0, \forall m=1,2,\dots,M$. In such case the LHS of \eqref{eq:ML_0_decision} is negative and $\hat{S}_{\text{ML}}(\yvec) = 1$, thus, there is no mismatch. Therefore, $\Pr \{ \text{mismatch} | x = 0, \theta_M < y_{\text{FA}} \} = \Pr \{ \text{mismatch} | x = 0, \theta_M < y_{\text{FA}} < x^{\ast} \}$. Now, we define the set:
\begin{align*}
	\mathcal{B}_2(y) & \triangleq \Bigg\{(y_2,y_3,\dots,y_M): \nonumber \\
	& \qquad \phi(y) \mspace{-3mu} > \mspace{-3mu} \sum_{m=2}^{M}{\log \left( \frac{y_m}{y_m \mspace{-3mu} - \mspace{-3mu} \Delta} \right) \mspace{-3mu} - \mspace{-3mu} \frac{c \Delta}{3} \frac{1}{y_m(y_m \mspace{-3mu} - \mspace{-3mu} \Delta)}}\Bigg\},
\end{align*}

\noindent and write $\Pr \{ \text{mismatch} | x = 0, \theta_M < y_{\text{FA}} < x^{\ast} \}$ as:
\begin{align}
	& \Pr \{ \text{mismatch} | x = 0, \theta_M < y_{\text{FA}} < x^{\ast} \} \nonumber \\
	& \qquad = \int_{\theta_M}^{x^{\ast}}f_{Y_{\text{FA}|X}}(y|x=0) \nonumber \\
	& \qquad \qquad \times \int_{\mathcal{B}_2(y)}{f_{\{Y_j\}_{j=2}^M|X}(\{y_j\}_{j=2}^M|x=0)} \{dy_j\}_{j=2}^M dy \nonumber \\ 
	& \qquad \le \int_{\theta_M}^{x^{\ast}}{f_{Y_{\text{FA}|X}}(y|x=0) dy} \nonumber \\
	& \qquad = \Psi(c,M,x^{\ast}) - \Psi(c,M,\theta_M). \label{eq:probMM_secondTermBound}
\end{align}

%
\noindent Following similar arguments, $\Pr \{ \text{mismatch} | x \mspace{-3mu} = \mspace{-3mu} \Delta, \theta_M \mspace{-3mu} < \mspace{-3mu} y_{\text{FA}} \}$ is upper bounded by:
\begin{align}
	& \Pr \{ \text{mismatch} | x = 0, \theta_M < y_{\text{FA}} \} \nonumber \\
	& \qquad \qquad  \le  \Psi(c,M,x^{\ast} - \Delta) - \Psi(c,M,\theta_M - \Delta). \label{eq:probMM_fourthTermBound}
\end{align}

\noindent Combining \eqref{eq:probMM_firstTermBound}, \eqref{eq:probMM_thirdTermBound}, \eqref{eq:probMM_secondTermBound},  and \eqref{eq:probMM_fourthTermBound} we conclude the proof.

\section{Proof of Theorem \ref{thm:ErrExp_FA}} \label{annex:thm_ErrExp_FA_proof}

Let $a_M \mspace{-3mu} \triangleq \mspace{-3mu} \left( 1 \mspace{-3mu} - \mspace{-3mu} \erfc \left( \sqrt{\frac{c}{2\theta_M}} \right) \right)^M$ and $b_M \mspace{-3mu} \triangleq \mspace{-3mu} 1 \mspace{-3mu} - \mspace{-3mu} \left(1 \mspace{-3mu} - \mspace{-3mu} \erfc \left( \sqrt{\frac{c}{2(\theta_M - \Delta)}} \right) \right)^M$. Then, explicitly writing the probability of error in \eqref{eq:errProbSymbBySymbFA}, the error exponent of the FA detector is given by:
	\begin{align}
			\mathsf{E}_{\text{FA}} & = \lim_{M \to \infty} - \frac{\log P_{\varepsilon, \text{FA}}^{(M)}}{M} \nonumber \\
			& = \lim_{M \to \infty} - \frac{\log \left( 0.5 \left( a_M + b_M \right) \right)}{M} \nonumber \\
			& = \min \left\{ \lim_{M \to \infty} - \frac{\log \left( a_M \right)}{M}, \lim_{M \to \infty} - \frac{\log \left( b_M \right)}{M} \right\}. \label{eq:Efa_General}
	\end{align}
	
	Since $\theta_M \to \Delta$, when $M \to \infty$, we write:
	\begin{align}
		& \lim_{M \to \infty} - \frac{\log \left( a_M \right)}{M} \nonumber \\
		& \qquad = \lim_{M \to \infty} - \frac{\log \left( \left( 1 - \erfc \left( \sqrt{\frac{c}{2\theta_M}} \right) \right)^M \right)}{M} \nonumber \\
		& \qquad = - \log \left( 1 - \erfc \left( \sqrt{\frac{c}{2 \Delta}} \right) \right). \label{eq:ErrExp_1stTerm}
	\end{align}
	
\noindent	Next, we analyze the second term in \eqref{eq:Efa_General}, and note that this term depend on the rate of convergence of $\theta_M$ to $\Delta$. Again, we use the fact that $\theta_M \to \Delta$ as $M \to \infty$, and write $\theta_M = \Delta + \delta_M$, where $\delta_M \to 0$. We then characterize the scaling behavior of $\delta_M$ to zero, for large values of $M$. As $\theta_M$ is the decision threshold, by equating the two PDFs in \eqref{eq:yFApdf}, we have the following equality in terms of $\delta_M$:
\begin{align}
	& \left( \frac{\delta_M}{\Delta + \delta_M} \right)^{\frac{3}{2}} e^{-\frac{c}{2} \left(\frac{\Delta}{\delta_M (\Delta + \delta_M)} \right)} \nonumber \\ 
	& \qquad \quad = \left(\frac{1 - \erfc \left( \sqrt{\frac{c}{2 (\Delta + \delta_M)}}  \right) }{1 - \erfc \left( \sqrt{\frac{c}{2 \delta_M} } \right)} \right)^{M-1}. \label{eq:vanish_deltam_1}
\end{align}
	
\noindent Let $M$ be sufficiently large, and recall the equality $\lim_{x \to \infty} \frac{- \log \left(\erfc(x) \right)}{x^2} = 1$. Furthermore, for $M$ large enough we can write $\Delta + \delta_M \approx \Delta$. Thus, we write \eqref{eq:vanish_deltam_1} as:
\begin{align}
	\left( \frac{\delta_M}{\Delta + \delta_M} \right)^{\frac{3}{2}} e^{-\frac{c}{2} \left(\frac{\Delta}{\delta_M (\Delta + \delta_M)} \right)} & \approx \left( \frac{\delta_M}{\Delta} \right)^{\frac{3}{2}} e^{-\frac{c}{2} \left(\frac{1}{\delta_M} \right)} \nonumber \\
	& \approx \left(\frac{\beta}{1 - e^{\frac{-c}{2 \delta_M}}} \right)^{M-1}, \label{eq:vanish_deltam_2}
\end{align}

\noindent where we let $\beta \triangleq 1 - \erfc \left( \sqrt{\frac{c}{2 \Delta}}  \right) \le 1$, and note that $1 - \erfc \left( \sqrt{\frac{c}{2 (\Delta + \delta_m)}}  \right) \approx \beta$. We now assume that $\delta_M$ scales as $\frac{d_1}{M}$, for some constant $d_1$, and show that for large enough $M$ the LHS and RHS of \eqref{eq:vanish_deltam_2} have the same scaling. This also enables finding the constant $d_1$, and calculating the error exponent of the second term in \eqref{eq:Efa_General}. We write \eqref{eq:vanish_deltam_2} as:
\begin{align}
	\left( \frac{d_1}{\Delta} \right)^{\frac{3}{2}} M^{\frac{-3}{2}} e^{-\frac{c}{2d_1} M } & \mspace{-3mu} = \mspace{-3mu} \frac{\beta^{M-1}}{\left( 1 - e^{\frac{-c}{2d_1}M} \right)^{M-1}} \nonumber \\
	& \mspace{-3mu} \approx \mspace{-3mu} \beta^{M-1} \left( 1 \mspace{-3mu} + \mspace{-3mu} (M \mspace{-3mu} - \mspace{-3mu} 1) e^{\frac{-c}{2d_1}M} \right). \label{eq:vanish_deltam_3}
\end{align}

\noindent Thus, by noting that the two sides of \eqref{eq:vanish_deltam_3} must scale to zero at the same rate, we write:\footnote{Note that as we are interested in the error exponent, we apply analysis which focuses only on the scaling law, and therefore we ignore terms which scale slower, e.g., $M^{\frac{-3}{2}}$.}
\begin{align}
	e^{-\frac{c}{2d_1} M } = e^{M(1+\log(\frac{\beta}{e}))} \Rightarrow d_1 = \frac{-c}{2(1+\log(\frac{\beta}{e}))}.
\end{align}

\noindent Having the scaling law of $\delta_M$, we now write the second term in \eqref{eq:Efa_General} as:
\begin{align}
& \lim_{M \to \infty} - \frac{\log \left( b_M \right)}{M} \nonumber \\
& \qquad = \lim_{M \to \infty} - \frac{\log \left( 1 - \left(1 - \erfc \left( \sqrt{\frac{c}{2(\theta_M - \Delta)}} \right) \right)^M \right)}{M} \nonumber \\
%
%
	%
	& \qquad = \lim_{M \to \infty} \frac{ - \log \left(M e^{\frac{-c}{2 d_1} M} \right) }{M} \nonumber \\
	& \qquad = -1 - \log \left(\frac{\beta}{e} \right) \nonumber \\
	& \qquad  = - \log \left( 1 - \erfc \left( \sqrt{\frac{c}{2 \Delta}} \right) \right). \label{eq:ErrExp_2ndTerm}
\end{align}

\noindent Finally, by plugging \eqref{eq:ErrExp_1stTerm} and \eqref{eq:ErrExp_2ndTerm} into \eqref{eq:Efa_General}, we conclude the proof.

\section{Proof of Theorem \ref{thm:nonBinScaling}} \label{annex:thm_nonBinScaling_proof}

We first recall the probability of error of the decision rule \eqref{eq:nonBin_FA_Det}:
\begin{align}
		P_{\varepsilon, \text{FA}} & = \frac{2^L - 1}{2^L} \Bigg( \left( 1 \mspace{-3mu} - \mspace{-3mu} \erfc \left( \sqrt{\frac{c}{2\theta_M}} \right) \right)^M \nonumber \\
		& \mspace{80mu} + \mspace{-3mu} 1 \mspace{-3mu} - \mspace{-3mu} \left(1 - \erfc \left( \sqrt{\frac{c}{2(\theta_M \mspace{-3mu} - \mspace{-3mu} \tilde{\Delta})}} \right) \right)^M \Bigg),
		\label{eq:errProbNonBinFA}
	\end{align}
	
	\noindent where $\theta_M$ is the solution of the following equation in $y_{\text{FA}}$:
	\begin{align}
		& y_{\text{FA}}(y_{\text{FA}} \mspace{-3mu} - \mspace{-3mu} \tilde{\Delta}) \mspace{-3mu} \cdot \mspace{-3mu}  \log \left( \frac{y_{\text{FA}}}{y_{\text{FA}} \mspace{-3mu} - \mspace{-3mu} \tilde{\Delta}} \right)  \nonumber \\
		& \mspace{10mu} + \mspace{-3mu} y_{\text{FA}}(y_{\text{FA}} \mspace{-3mu} - \mspace{-3mu} \tilde{\Delta}) \mspace{-3mu} \cdot \mspace{-3mu} \log \mspace{-3mu} \left( \mspace{-3mu} \frac{1 \mspace{-3mu} - \mspace{-3mu} \erfc \mspace{-3mu} \left( \sqrt{\frac{c}{2(y_{\text{FA}} - \tilde{\Delta})}} \right)}{1 \mspace{-3mu} - \mspace{-3mu} \erfc \mspace{-3mu} \left( \sqrt{\frac{c}{2y_{\text{FA}}}} \right)} \mspace{-3mu} \right)^{\mspace{-10mu} \frac{2(M \mspace{-3mu} - \mspace{-3mu}1)}{3}} \mspace{-12mu} = \mspace{-3mu} \frac{c \tilde{\Delta}}{3}, 
		\label{eq:thetaMEquation_nonBin}
	\end{align}
	
	\noindent As in Appendix \ref{annex:thm_ErrExp_FA_proof}, we let $\theta_M = \Delta + \delta_M$, where $\delta_M \to 0$. 
	Plugging the expression for $\theta_M$ into \eqref{eq:errProbNonBinFA} we write:
	\begin{align}
		P_{\varepsilon, \text{FA}} & \propto \left( 1 - \erfc \left( \sqrt{\frac{c}{2\theta_M}} \right) \right)^M \nonumber \\
		& \qquad + 1 - \left(1 - \erfc \left( \sqrt{\frac{c}{2(\theta_M - \tilde{\Delta})}} \right) \right)^M \nonumber \\
		& \stackrel{(a)}{=} \left( 1 - \erfc \left( \sqrt{\frac{c}{2(\tilde{\Delta} + \delta_M)}} \right) \right)^M \nonumber \\
		& \qquad + 1 - \left(1 - \erfc \left( \sqrt{\frac{c}{2\delta_M }} \right) \right)^M \nonumber \\
		& \stackrel{(b)}{\approx} \left( 1 - \frac{1}{6} e^{-\frac{c}{2} \frac{1}{\tilde{\Delta} + \delta_M}} \right)^M + 1 - \left(1 - \frac{1}{6} e^{-\frac{c}{2} \frac{1}{\delta_M}} \right)^M \nonumber \\
		& \stackrel{(c)}{\approx}  e^{-\frac{M}{6} e^{-\frac{c}{2} \frac{1}{\tilde{\Delta} + \delta_M}}} + 1 - e^{-\frac{M}{6} e^{-\frac{c}{2} \frac{1}{\delta_M}}}, \label{eq:eq:errProbNonBinFA_approx_1}
	\end{align}
	
	\noindent where (a) follows by plugging $\theta_M = \Delta + \delta_M$; (b) follows from the approximation $\erfc(x) \approx \frac{1}{6} e^{-x^2}$ \cite[Eq. (14)]{chiani2003}; and (c) follows from the limit definition of the exponential function. 
	
	Next, recall that $\delta_M$ scales as $\frac{d_1}{M}$ (for the details see Appendix \ref{annex:thm_ErrExp_FA_proof}). Plugging this scaling into the second exponential term in \eqref{eq:eq:errProbNonBinFA_approx_1} we obtain:
	\begin{align}
		1 - e^{-\frac{M}{6} e^{-\frac{c}{2} \frac{1}{\delta_M}}} \approx 1 - e^{-\frac{M}{6} e^{-\frac{c}{2} \frac{M}{d_1}}} \to_{M \to \infty} 0.
	\end{align}
	
\noindent For the first term in \eqref{eq:eq:errProbNonBinFA_approx_1} we note that $\tilde{\Delta} \mspace{-3mu} \approx \mspace{-3mu} 2^{-L}\Delta$ and write:
\begin{equation*}
	e^{-\frac{M}{6} e^{-\frac{c}{2} \frac{1}{\tilde{\Delta} + \delta_M}}} \mspace{-3mu} = \mspace{-3mu} e^{-\frac{M}{6} g(M,L)},
\end{equation*}
 
\noindent where $g(M,L) \mspace{-3mu} \triangleq \mspace{-3mu}  e^{-\frac{c}{2} \frac{1}{2^{-L} \Delta + \frac{d_1}{M}}}$. 
Now, for the probability of error to vanish, it is required that $M \cdot g(M,L) \to \infty$. Thus, we require $g(M,L) \propto M^{-(1-\epsilon)}$ for some $\epsilon > 0$. Explicitly writing this relationship we obtain:
\begin{align}
	\log M^{1-\epsilon} \mspace{-3mu} \propto \mspace{-3mu} \frac{1}{2^{-L} \Delta + \frac{d_1}{M}} \Rightarrow 2^{-L} \Delta \mspace{-3mu} + \mspace{-3mu} \frac{d_1}{M} \mspace{-3mu} \propto \mspace{-3mu} \frac{1}{\log M^{1-\epsilon}}.
\end{align}

\noindent Thus, 
\begin{align}
	\Delta 2^{L} \propto \log M^{1-\epsilon} \Rightarrow L \propto \log \log M^{1-\epsilon}.
\end{align}

\noindent In conclusion, for the probability of error to vanish, $L$ should scale as $\log \log M^{1-\epsilon}$.
	


\vspace{-0.2cm}

\end{document}